\newif\ifCLASSOPTIONromanappendices \CLASSOPTIONromanappendicestrue
\newtheorem{theorem}{Theorem}    
\newtheorem{corollary}{Corollary}[theorem]
\newtheorem{remark}{Remark}
\newtheorem{lemma}{Lemma}
\definecolor{awesome}{rgb}{1.0, 0.13, 0.32}
\theoremstyle{plain}
\begin{document}
\title{Age-Limited Capacity of Massive MIMO}

\author{Bamelak Tadele, Volodymyr Shyianov, Faouzi~Bellili, \IEEEmembership{Member, IEEE}, Amine Mezghani, \IEEEmembership{Member, IEEE}, and Ekram Hossain, \IEEEmembership{Fellow, IEEE}
\thanks{The authors are with the Department of Electrical and Computer Engineering at the University of Manitoba, Winnipeg, MB, Canada. (emails:\{tadeleb,shyianov\}@myumanitoba.ca, \{Faouzi.Bellili,Amine.Mezghani,Ekram.Hossain\}@umanitoba.ca). This work was supported by the Discovery Grants Program of the Natural Sciences and Engineering Research Council of Canada (NSERC).}}

\maketitle
\begin{abstract}
We investigate the age-limited capacity of the Gaussian many channel with total $N$ users, out of which a random subset of $K_{a}$ users are active in any transmission period, and a large-scale antenna array at the base station (BS). In an uplink scenario where the transmission power is fixed among the users, we consider the setting in which both the number of users, $N$, and the number of antennas at the BS, $M$, are allowed to grow large at a fixed ratio $\zeta = {M}/{N}$. Assuming perfect channel state information (CSI) at the receiver, we derive the achievability bound under maximal ratio combining. As the number of active users, $K_{a}$, increases, the achievable spectral efficiency is found to increase monotonically to a limit $\log_2\left(1+\frac{M}{K_{a}}\right)$. Further extensions of the analysis to the zero-forcing receiver as well as imperfect CSI are provided, demonstrating the channel estimation penalty in terms of the mean squared error in estimation. Using the age of information (AoI) metric, first coined in \cite{kaul2011minimizing}, as our measure of data timeliness or freshness, we investigate the trade-offs between the AoI and spectral efficiency in the context massive connectivity with large-scale receiving antenna arrays. As an extension of \cite{liu2018massiveII}, based on our large system analysis, we provide an accurate characterization of the asymptotic (finite system size) spectral efficiency as a function of the number of antennas and the number of users, the attempt probability, and the AoI. It is found that while the spectral efficiency can be made large, the penalty is an increase in the minimum AoI obtainable. The proposed achievability bound is further compared against recent massive MIMO-based massive unsourced random access (URA) schemes.
\end{abstract}

\begin{IEEEkeywords}
Massive MIMO, Age of Information, Unsourced Random Access, Packet Error Probability, Spectral Efficiency
\end{IEEEkeywords}

\section{Introduction}
\subsection{Background and Motivation}
\IEEEPARstart{M}{assive} random access in which a base station equipped 
with a large number of antennas is serving a large number of contending users has recently attracted considerable attention. This surge of interest is fuelled by the need to satisfy the soaring demand in wireless connectivity for many envisioned IoT applications such as massive machine-type communication (mMTC). Machine-type communication (MTC) has two distinct features \cite{dutkiewicz2017massive} that make them drastically different from human-type communications (HTC) around which previous cellular systems have mainly evolved: $i)$ machine-type devices (MTDs) require sporadic access to the network and $ii)$ MTDs usually transmit small data payloads using short-packet signaling. The sporadic access leads to the overall mMTC traffic being generated by an unknown and random subset of active MTDs (at any given transmission instant or frame). This calls for the development of scalable random  access protocols that are able to accommodate a massive number of MTDs. Short-packet transmissions, however, make the traditional grant-based access (with the associated scheduling overhead) fall short in terms of spectrum efficiency and latency, which are two key performance metrics in next-generation wireless networks. Hence, a number of grant-free random access schemes have been recently investigated within the specific context of massive connectivity (see \cite{yuan2016non,Wei_Yu_paper} and references therein). From the information-theoretic point of view, the problem of massive random access is not recent and dates back to the seminal work of Gallager in  \cite{gallager1985perspective}. However, with an increasing number of possible applications the problem has reappeared in a new context \cite{chen2017capacity,polyanskiy2017perspective}. As opposed to classical treatments of the Gaussian multiple access channel in which the number of users stays fixed, in the new Gaussian many channel formalism the number of users is allowed to grow with the blocklength \cite{chen2017capacity} in a typical massive connectivity setup. Note that when a randomly varying subset of users (with different codebooks) are active over each transmission period, one is bound to sacrifice some of the spectral efficiency for user-identification \cite{chen2017capacity,liu2018massive,liu2018massiveII}. However, when all the devices employ the same codebook (aka, unsourced access), the user-identification problem can be separated from the decoding problem as highlighted in \cite{polyanskiy2017perspective}. In fact, by letting all the devices employ the same codebook, the system spectral efficiency depends on the number of active users only and not on total number of users, thereby making different multi-user decoders comparable against each other and to the random coding bound. In particular, it was shown in \cite{polyanskiy2017perspective} that increasing the number of active users at a fixed per-user payload renders known solutions such as ALOHA far from the random coding achievability bound. The paradigm in \cite{polyanskiy2017perspective} where all users share the same codebook with no need for user identification was later dubbed unsourced random access and now has a number of viable algorithmic solutions. However, most of the existing information-theoretic works on massive connectivity focus on the case of a single receive antenna at the BS. Yet, the idea of using a large-scale antenna array at the BS (i.e. massive MIMO) which was first pioneered in \cite{marzetta2010noncooperative}, has now become one of the main directions towards which the next-generation of cellular systems are projected to evolve.      \\\
\indent
From another perspective, in many real-time applications wherein the data is subject to abrupt variations, usefulness of the information when it arrives at the BS is directly related to its freshness. Due to infrequent access to the network, conventional performance metrics, such as delay fall short in characterizing the over-all freshness of the data \cite{kaul2012real}. In this respect, the AoI concept \cite{kaul2011minimizing} was introduced to adequately characterize the freshness of the information at the receiver side. While many of the existing works on AoI focus primarily on grant-based access with AoI-constrained scheduling policies \cite{jiang2019timely,kadota2019scheduling}, some have looked at uncoordinated transmission schemes. Recently, a few information-theoretic works have investigated the trade-off between the AoI and achievable data rates \cite{bastopcu2020partial,baknina2018sening}. The performance of AoI has been investigated in Multiple-Input Multiple-Output (MIMO) systems \cite{chen2020multiuser,zhu2020status,feng2022precoding,yu2021age}. In \cite{chen2020multiuser}, the user scheduling problem has been investigated to minimize AoI in a multiuser MIMO status update system where multiple single-antenna devices send their information over a common wireless uplink channel to a multiple-antenna access point. In \cite{feng2022precoding}, a novel MIMO broadcast setting is studied to minimize the sum average AoI through precoding and transmission scheduling.  In \cite{zhu2020status}, the authors analyzed and optimized the performance of AoI in a grant-free random-access system with massive MIMO.

\subsection{Contributions}

The major contributions of this paper are summarized as follows.
\begin{itemize}
    \item We derive a closed-form expression of the outage probability in the finite-user, finite-antenna regime and through use of the central-limit theorem (CLT) we express this outage probability in the asymptotic case where both the number of users and the number of antennas are allowed to grow large at a fixed ratio.
    \item Under the assumption of perfect CSI at the receiver, we derive an achievability bound using a maximal ratio combining (MRC) receiver. We demonstrate how this achievable bound scales with the number of users in the finite regime (e.g. in Theorem \ref{supremum theorem} in Section IV) and further elaborate on its behaviour in the limit (e.g. through Theorems \ref{Age Capacity} and \ref{Strong Converse} in Section IV). 
    \item We show that fully uncoordinated non-orthogonal access can achieve minimum AoI as long as all the devices are active in each transmission period. Furthermore, our analysis reveals that with a large-scale antenna array at the BS both high spectral efficiency and low AoI can be achieved.
    \item We further extend the analysis to the case of imperfect CSI as well as the zero-forcing receiver. We derive the asymptotic as well as limiting spectral efficiency of both the MRC as well as the zero-forcing receiver when the estimation error is added to the noise contribution.
    \item Finally, using our bound, we gauge the performance of recent massive MIMO unsourced random access (URA) schemes.
\end{itemize}
The work that is most closely related to the results presented in this paper is reported in \cite{liu2018massiveII}, where the authors considered a massive connectivity with massive MIMO system for uplink data communication. In their paper, they used the state evolution framework to obtain the limiting MSE of the approximate message passing (AMP) channel estimation/activity detection algorithm. They further calculated the achievable rate (interference limited capacity) with the MRC as well as the LMMSE receiver. The limitation of their approach is that it treats only the asymptotic convergence as both the number of antennas and users are infinite while the ratio of the number of antennas to the number of users stays finite. On the other hand, the outage probability formulation together with the approximation analysis presented in this work allows for the asymptotic spectral efficiency characterization of large, yet finite, systems. The non-asymptotic point of view, Theorem \ref{supremum theorem} in the manuscript, provides the spectral efficiency as well as a precise, $\mathcal{O}(N^{-1.5})$, correction term.
\subsection{Organization of the Paper and Notations}
We structure the rest of this paper as follows. In Section \ref{section_2}, we introduce the system model. In Section \ref{section_3}, we derive the exact packet probability of error and also find its more insightful asymptotic approximation. In Section \ref{section:spectral_efficiency}, we state our main results on the trade-off between achievable spectral efficiency and the AoI. In Section \ref{section 5}, we extend the analysis to the case of imperfect CSI for the MRC and ZF receivers. These results are further corroborated by computer simulations in Section \ref{section_6}. Finally, we draw out some concluding remarks in Section \ref{Section Conclusion} and prove our various claims in the Appendices. 
\newline 
\indent We also mention the common notations used in this paper. Lower- and upper-case bold fonts, $\mathbf{x}$ and $\mathbf{X}$, are used to denote vectors and matrices, respectively. $\mathbf{I}_{M}$ denotes the $M\times M$ identity matrix. The symbols $|.|$ and $\|.\|_2$ stand for the modulus and Euclidean norm, respectively. $\{.\}^\textsf{H}$ stands for the Hermitian (transpose conjugate) operator. The shorthand notation $\mathbf{y}$ $ \sim \mathcal{CN}( \mathbf{m}, \mathbf{R})$ means that the random vector $\mathbf{y}$ follows a complex circular Gaussian distribution with mean $\mathbf{m}$ and auto-covariance matrix $\mathbf{R}$. Likewise, $S \sim \Gamma(k, \theta)$ means that the random variable $S$ follows a Gamma distribution with shape parameter $k$ and scale parameter $\theta$. The statistical expectation is denoted as $\mathbb{E}\{.\}$, and the notation $\triangleq$ is used for definitions.

\section{System Model, Assumptions, and Methodology of Analysis}\label{section_2}

\subsection{System Model, Assumptions, and Definition of AoI}
Consider a single-cell network consisting of $N$ single-antenna devices transmitting their status packets over an unreliable multiple-access channel to a BS with $M$ receive antenna elements. To aid synchronization, time is partitioned into slots of equal length $T$, which is the maximum amount of time for transmission and reception of a single information packet. This paper assumes sporadic device activity where at the start of every time slot user $i$ transmits is current status with probability $\tau_{i}$. We define $\{\varepsilon_i\}_{i=1}^{N}$ as the binary activity random variables which indicate whether user $i$ transmits its packet or remains idle in a given slot:
\begin{eqnarray}
\varepsilon_{i}&=& \Bigg\{ \begin{array}{ll}
{1} & \text {if user $i$}  \text { transmits his packet,~} \\
{0} & {\text {if user $i$ }  \text {remains idle.}} 
\end{array} 
\end{eqnarray}
Moreover, we assume $\{\varepsilon_i\}_{i = 1}^{N}$ are independent in each time slot with  marginal distributions $\Pr(\varepsilon_i = 1) = \tau_i$. 
To maintain timely status updates, in every slot a new packet is generated by each user. In this static macro-cell environment, where the coherence time is on the order of hundreds of milliseconds and delay spread is on the order of microseconds \cite{heath2018foundations}, the channel remains fairly constant and thus we assume a quasi-static Rayleigh fading model\textcolor{blue}{\footnote{The assumption of uncorrelated channels between the different antenna elements requires sufficiently large inter-element spacing while the assumption of uncorrelated channel vectors between users is only valid at large separation.}} for the duration of the slot in which $\mathbf{h}_i\sim\mathcal{{CN}}(\mathbf{0},\mathbf{I}_{M})$ denotes the $M\times1$ channel vector between the $i$'th user and the BS. The received signal at the BS at discrete time $n$ can then be written as:
\begin{equation}\label{received signal}
    \mathbf{y}_n ~=~ \sum_{i=1}^{N}\mathbf{h}_i\varepsilon_{i}x_{i,n}~+~\mathbf{w}_n,
\end{equation}
where $x_{i,n}\sim\mathcal{{CN}}(0,P_i)$ is the transmitted symbol, while $\mathbf{w}_n\sim\mathcal{CN}(\mathbf{{0}}_{M},\sigma_{\mathbf{w}}^2\mathbf{{I}}_{M})$ is the additive white Gaussian noise (AWGN) which is assumed spatially uncorrelated across all receive antennas. In the presence of $K_{a}$ active users in a given transmission slot, the above formulation is a $K_{a}$-user single-input multiple-output (SIMO) fading Gaussian multiple access channel (GMAC). 

Using $p_{e,i}$ to denote the slot-wise packet error probability (PEP) of the $i$th user, the probability that the $i$th user updates the BS with its current status is then given by:
\begin{equation}\label{Success Probability}
    \gamma_{i} ~=~ \tau_{i}(1 -  p_{e,i}).
\end{equation}
An example of the slotted system with $N=3$ users with $T = 1$ is depicted in Fig. \ref{fig:Model_Time}.
\begin{figure}[t]
    \centering
    \includegraphics[width=1\linewidth]{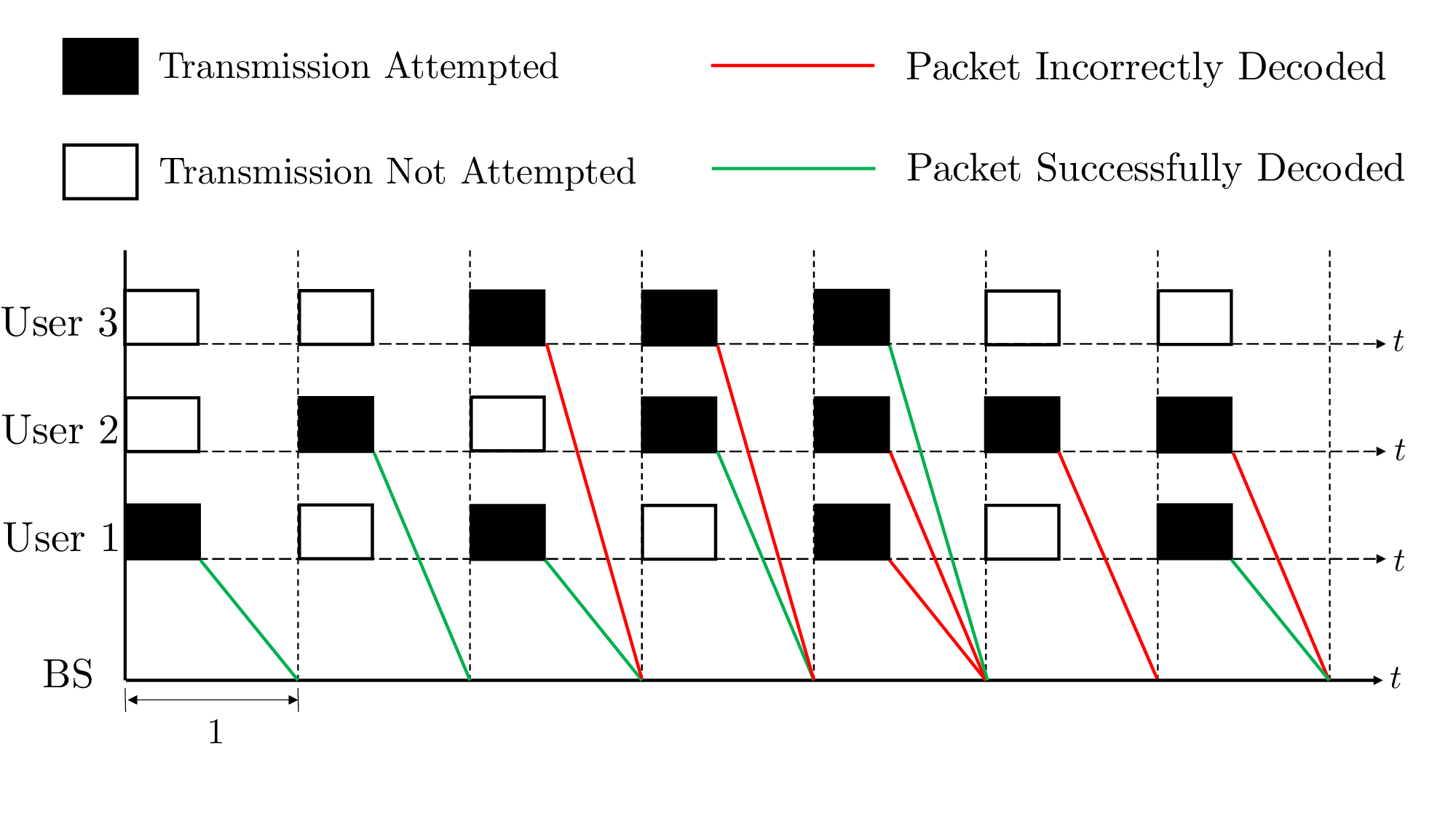}
    \caption{Slotted system with $N$ = 3 total users and unit slot length.}  
    \label{fig:Model_Time}
\end{figure}
Under perfect CSI at the receiver  we use maximal-ratio combining (MRC) and assume that the blocklength is sufficiently large such that the capacity limit is approached within a packet, as justified in \cite{yang2014quasi}. Consequently, $p_{e,i}$ can be closely approximated by the following outage probability: 
\begin{eqnarray} \label{Outage Probability Definition}
    \!\!1 - p_{e,i} = &\nonumber\\ & \!\!\!\!\!\!\!\!\!\!\!\!\!\!\!\!\!\!\!\!\!\Pr\left\{ \rho_i \!<\! \log_2\left(\!1+\frac{\|\mathbf{h}_i\|_2^4 P_i}{\|\mathbf{h}_i\|_2^2 \sigma_\mathbf{w}^2 + \sum_{\substack{j=1 \\j\neq i}}^{N}|{\mathbf{h}_i}^{\textsf{H}} \mathbf{h}_j|^2 \varepsilon_j P_j }\right)\right\},
\end{eqnarray}
in which $\rho_i$ [bits/channel-use] is the spectral efficiency of the $i$th user. Recall that the ultimate goal of each node is to keep the BS updated with its most recent state. If the BS has node $i$'s state that was current at time $t_0$, the age of that user's state is defined by the random process $\delta_i(t)$ $\triangleq$ $t - t_0$. When node $i$ attempts transmission and gets correctly decoded at the BS we call it an arrival. We denote $i$th node's $j$th arrival epoch by $t_{i,j}$. Between two arrival epochs, the age grows as a stair-case function of time and is reset to $T$ when an arrival occurs, since this is the amount of time that it took for a packet to be transmitted. We denote by $Z_{i,j}$ the inter-arrival time between the $j$th update and the $(j+1)$th update (i.e. $Z_{i,j}\triangleq t_{i,j+1} - t_{i,j}$). Assuming that the total number of users remains constant in each time slot, user $i$ has a certain success probability, $\gamma_i$, that it will update the BS with its state. After normalizing the slotted period to $T=1$, it then follows that each $Z_{i,j}$ is a geometric random variable with parameter $\gamma_i$. 

Similar to \cite{yates2017status}, we define the AoI, $\Delta_{i}$, of each node $i$ as: 
\begin{equation}\label{Age of Information}
    \Delta_{i} ~\triangleq~ \lim_{t'\to \infty}\frac{1}{t'}\int_{0}^{t'}{\delta_i(t)dt}.
\end{equation}
 {For completeness, we show in \textbf{Appendix \ref{appendix:Age}} that the limit in (\ref{Age of Information}) exists and that it converges with probability one (WP1) to
\begin{equation}\label{Limit of Age of User i}
    \Delta_i ~=~ \frac{\mathbb{E}[Z_i^2]}{2\mathbb{E}[Z_i]} ~+~ \frac{1}{2},
\end{equation}
as was done in more general terms in \cite{li2014throughput}.}
Since the $Z_{i}$'s are geometric random variables, it follows that $\mathbb{E}[Z_{i}] ~=~ \frac{1}{\gamma_{i}}$ and $\mathbb{E}[Z_{i}^2] ~=~ \frac{2}{\gamma_{i}^2} - \frac{1}{\gamma_{i}}$, thereby leading to\footnote{Notice that since the error probability of the $i$th user $p_{e,i}$ is averaged over the number of active users, the arrival process is still Bernoulli}:
\begin{equation}\label{Age of information of Users i}
    \Delta_i ~=~ \gamma_{i}^{-1}.
\end{equation}

In the presence of $N$ total users, we consider the network-wide average AoI, given by:  
\begin{equation}\label{definition of network AoI}
    \Delta = \frac{1}{N}\sum_{i=1}^{N} \Delta_i.
\end{equation}

\subsection{Methodology of Analysis}

Since the AoI solely depends on the parameter $\tau_i$ and the outage probability $p_{e,i}$, the main objective will be to find the outage probability. In the following, we will first do this in closed-form and thereafter, through use of the central limit theorem, we will find an approximation in the asymptotic regime. This later allows us to find an explicit relationship between the spectral efficiency, $\tau$, the ratio $M/N$, and the probability of error. To this end, we will illustrate exactly how the spectral efficiency scales in the finite-user, finite-antenna case (e.g., through \textbf{Theorem \ref{supremum theorem}} in Section IV). Thereafter, we will take the limit as the number of users and the number of antennas grow large and we will find a phase-transition where the AoI is minimized in one regime and grows unbounded in the other (as will be stated in \textbf{Theorem \ref{Age Capacity}} and \textbf{Theorem \ref{Strong Converse}} in Section IV).



\section{Derivation of Packet Error Probability}\label{section_3}
\subsection{Derivation of Exact PEP}
Recall from (\ref{Success Probability}) that in order for user $i$ to successfully update the BS with its status in a given slot $a)$ it must attempt a transmission in that slot and $b)$ the transmitted packet must be decoded correctly. For ease of analysis, we consider a symmetric system wherein the $N$ users have the same transmit power (i.e. $P_i=P$ $\forall i$) and the same attempt probability (i.e. $\tau_i = \tau$ $\forall i$). Dividing the second term inside the logarithm in (\ref{Outage Probability Definition}) by $\|\mathbf{h}_i\|_2^2$ and rearranging the terms we obtain:
\begin{equation}\label{simplified_pe_first_time}
   \!\!\!1-p_{e,i} = \Pr\left\{ \left(2^\rho -1\right)\Bigg(\sigma_\mathbf{w}^2\!+\!\sum_{\substack{j=1 \\j\neq i}}^{N}|P \varepsilon_j \widetilde{\mathbf{h}}_i^{\textsf{H}} \mathbf{h}_j|^2 \Bigg) \!\leq\! \|\mathbf{h}_i\|_2^2 P\right\}
\end{equation}
where $\widetilde{\mathbf{h}}_i = \frac{\mathbf{h}_i}{\|\mathbf{h}_i\|_2}$.  {For notational compactness we define:
\begin{equation}
    \alpha_\rho = \frac{1}{2^\rho - 1},
\end{equation}
and we denote the inverse signal-to-noise ratio (SNR) as $\beta~\triangleq~\frac{\sigma_\mathbf{w}^2}{P}$.} Using these notations and further simplifying (\ref{simplified_pe_first_time}), we obtain:
\begin{equation}\label{simplified pi}
   1 - p_{e,i} ~=~ \Pr\left\{ \alpha_\rho \|\mathbf{h}_i\|_2^2 ~-~ \sum_{\substack{j=1 \\j\neq i}}^{N}|\widetilde{\mathbf{h}}_i^{\textsf{H}} \mathbf{h}_j|^2 \varepsilon_j ~\geq~ \beta \right\}.
\end{equation}

As was shown in \cite{shah2000performance}, $\widetilde{\mathbf{h}}_i^{\textsf{H}}\mathbf{h}_j\sim\mathcal{CN}(0,1)$ $\forall$ $i,j$ and they are mutually independent and also independent of $\|\mathbf{h}_i\|_2^2$. In order to deal with the random sum in (\ref{simplified pi}), we condition on the event of having $k$ other users being active with the $i$th user. Since the $\varepsilon_j$'s are independent and identically distributed (i.i.d) Bernoulli random variables (RVs), the probability that $k$ users are active out of the remaining $N-1$ users (i.e. after excluding user $i$) is the same as having $k$ successes in $N-1$ Bernoulli trials. Thus, the number of active users follows a Binomial distribution with parameters $N-1$ and $\tau$. In order to calculate $p_{e,i}$ for each $i$th user it is convenient to marginalize over the number of other active users, thereby leading to:
\begin{equation}\label{Total Average}
    p_{e,i} ~=~ 1 ~-~ \sum_{k=0}^{N-1}{{N-1} \choose {k}} {\tau^{k}}{(1-\tau)^{N-1-k}} p_{i | k},
\end{equation}
where $p_{i|k}$ is defined as the conditional probability of successful decoding, conditioned on $k$ other users being also active. More specifically, we have:
\begin{equation}\label{conditional probability}
    p_{i|k} ~=~ \Pr\left\{ \alpha_\rho \|\mathbf{h}_i\|_2^2 ~-~ \sum_{j=1}^{k}|\widetilde{\mathbf{h}}_i^{\textsf{H}} \mathbf{h}_j|^2  ~\geq~ \beta \right\},
\end{equation}
where $\|\mathbf{h}_i\|_2^2$ follows a gamma distribution with shape parameter $M$ and scale parameter $1$ (i.e., $||\mathbf{h}_i||_{2}^2 \sim \Gamma(M,1)$). Similarly, the second term in (\ref{conditional probability}) is a sum of $k$ complex normal RVs squared and hence follows a $\Gamma(k,1)$ distribution. By defining $H$~$\triangleq$~$\alpha_\rho \|\mathbf{h}_i\|_2^2$ and $X_k$~$\triangleq$~$\sum_{j=1}^{k}|\widetilde{\mathbf{h}}_i^{\textsf{H}} \mathbf{h}_j|^2$, we see that $p_{i|k}$ in (\ref{conditional probability}) is the complementary distribution function of the RV $Z$ $\triangleq$ $H - X_{k}$, as a function of the inverse SNR. In the case there are no other active users (i.e. $k=0$), $p_{i|k}$ is given by the complementary distribution of a gamma RV\footnote{A gamma RV, with scale parameter $\theta$, multiplied by a real number $\alpha_{\rho}$, is another gamma RV with scale parameter $\alpha_{\rho}\theta$.}. For $k>0$, however, one can find the probability density function (pdf) of $Z$ through convolution, thereby leading to:
\begin{equation}\label{pdf of Z}
\!\!\!\!\!\!f_{Z}(z)=\left\{\begin{array}{ll}
{\kappa\mathlarger{\int_{-\infty}^{~z}(-x)^{k-1}(z-x)^{M-1}e^{2^{\rho}x}dx}} & {\text {\!\!\! if $z < 0$} } \\
{\kappa\mathlarger{\int_{-\infty}^{~0}(-x)^{k-1}(z-x)^{M-1}e^{2^{\rho}x}dx}} & {\text {\!\!\!if $z$ } \geq \text {0,}}
\end{array}\right.
\end{equation}
where $\kappa = \frac{e^{\frac{-z}{\alpha_\rho}}}{(k-1)!(M-1)!\alpha_{\rho}^M}$.
Since we are primarily interested in the probability that $Z$ is greater than\footnote{Recall here that $\beta$ is the inverse SNR which is a positive quantity.} $\beta>0$, we are only concerned with $f_{Z}(z)$ for non-negative values of $z$. By further manipulating the integral (\ref{pdf of Z}), it can be shown that the pdf for $z\geq0$ can be written as:
\begin{eqnarray}\label{Whittaker function}
    \!\!\!f_Z(z) = &\nonumber\\& \!\!\!\!\!\!\!\!\!\!\!\!\!\!\!\!\!\!\!\!\!\!\!\!\!\!\!{\frac{z^{\left(\frac{M+k-2}{2}\right)}}{(M-1)!\alpha_{\rho}^{M}  2^{\frac{\rho}{2} (M+k)}}}{\exp{\left(-\frac{z}{2}(2^{\rho} - 2)\right)}}{\widetilde{W}_{\frac{M-k}{2},\frac{1-M-k}{2}}(2^{\rho}z)}
\end{eqnarray}
where $\widetilde{W}(.)$ denotes the Whittaker function. Averaging $p_{i|k}$ over the number of active users and incorporating everything together we can finally write $p_{e,i}$ as follows:
\begin{eqnarray}\label{Non-asymptotic Pi}
    p_{e,i} = 1 -  (1-\tau)^{N-1}\big(1-\Pr\{H\leq\beta\}\big) &\nonumber\\& \!\!\!\!\!\!\!\!\!\!\!\!\!\!\!\!\!\!\!\!\!\!\!\!\!\!\!\!\!\!\!\!\!\!\!\!\!\!\!\!\!\!\!\!\!\!\!\!\!\!\!\!\!\!\!\!\!\!\!\!\!\!\!\!\!\!\!\!\!\!\!\!\!\!\!\!\!\!\!\!\!\!\!\!\!\!\!\!\!\!- \sum_{k=1}^{N-1}{N-1\choose k}\tau^{k}(1-\tau)^{N-1-k}\int_{\beta}^{\infty}f_{Z}(z)dz.
\end{eqnarray}

\subsection{Asymptotic Approximation of PEP}\label{section:assymptotic PEP}
While (\ref{Non-asymptotic Pi}) is an exact expression for PEP, it does not provide insights into the scaling law of error probability as the total number of users and the number of BS antenna branches both increase at a fixed ratio. In this Section, we derive an asymptotic approximation of PEP which becomes increasingly exact in the large system limit. More specifically, we will let the number of users $N$ and the number of antennas $M$ grow large, while keeping their ratio, $\zeta\,\triangleq\,\frac{M}{N}$, constant. 

The analysis technique utilized in what follows capitalizes on the Berry-Esseen theorem. Proofs of the various claims introduced in this Section are detailed in \textbf{Appendix \ref{Appendix:CLT}}. Using symmetry arguments, it can be seen that $p_{e,i}$ does not depend on $i$ and after omitting that index it follows from (\ref{Total Average}) that:
\begin{eqnarray}\label{Explicit Total Probability}
   1 - p_e = \sum_{k=0}^{N-1} \Pr\left\{\varepsilon_1+\ldots+\varepsilon_{N-1} = k \right\}G_{M,k}(\beta),
\end{eqnarray}
where $G_{M,k}(\beta) = \Pr\left\{\sum_{m=1}^{2M}\!\!S_{m}\!\!+\!\!\sum_{l=1}^{2k}V_{l} \geq \beta\right\}$
in which $S_m\sim \Gamma(\frac{1}{2},\alpha_{\rho})$ and $V_l\sim \Gamma(\frac{1}{2},1)$.
Using Berry-Essen central limit theorem (BE-CLT), the inverse cumulative distribution function (CDF) of the sum of gamma RVs in (\ref{Explicit Total Probability}) converges uniformly to the standard normal inverse CDF (see Lemma \ref{Lemma_1} in \textbf{Appendix \ref{Appendix:CLT}}), i.e.
\begin{equation}\label{large_sum_control}
    \Pr\left\{\sum_{m=1}^{2M}S_{m}+\sum_{l=1}^{2k}V_{l} \geq \beta\right\} ~=~ Q(w(k))~+~\mathcal{O}\left(\frac{1}{\sqrt{M+k}}\right),
\end{equation}
where $w(k) = \frac{\beta - \alpha_{\rho}M+k}{\sqrt{\alpha_{\rho}^{2}M+k}}$ and $Q(.)$ is the standard Q-function, (i.e., the tail of the normal distribution):
\begin{eqnarray}\label{Q-function definition}
Q(x)~=~\frac{1}{\sqrt{2{\pi}}}\int_{x}^{+\infty}\mathlarger{e^{\frac{-t^2}{2}}}dt.
\end{eqnarray}
Now incorporating the result in (\ref{large_sum_control}) into (\ref{Explicit Total Probability}) and then using the CLT on $\Pr\{\varepsilon_1+\ldots+\varepsilon_{N-1} = k \}$, (\ref{Explicit Total Probability}) can be re-written as (see Lemma \ref{appendix_A:Q_function} in \textbf{Appendix \ref{Appendix:CLT}}):
\begin{equation}\label{approximation of Pi}
  p_{e} ~=~ 1 ~-~ \frac{1}{\sqrt{2 \pi}}\int_{-\infty}^{\infty}Q(w(s))e^{-\frac{s^2}{2}}ds ~+~ \mathcal{O}\left(\frac{1}{\sqrt{N}}\right),
\end{equation}
where
\begin{equation}\label{before N-1 Approximation}
    w(s) ~=~ \frac{\beta-\alpha_{\rho}M+s\sqrt{(N-1)\tau(1-\tau)}+(N-1)\tau}{\sqrt{\alpha_{\rho}^{2}M+s\sqrt{(N-1)\tau(1-\tau)}+(N-1)\tau}}.
\end{equation}
For large $N$, we approximate $N-1$ by $N$ (see Lemma \ref{lemma_3} in \textbf{Appendix \ref{Appendix:CLT}}) thereby leading to:
\begin{equation}\label{Approximating N-1}
    w(s) ~=~ \frac{\beta-\alpha_{\rho}M+s\sqrt{N\tau(1-\tau)}+N\tau}{\sqrt{\alpha_{\rho}^{2}M+s\sqrt{N\tau(1-\tau)}+N\tau}}~ +~ \mathcal{O}\left(\frac{1}{\sqrt{N}}\right).
\end{equation}
Then by substituting $\zeta = \frac{M}{N}$ and multiplying both the numerator and denominator by $\frac{1}{\sqrt{N}}$ we obtain:
\begin{equation}\label{before neglecting terms}
       w(s) ~=~ \frac{\frac{\beta}{\sqrt{N}}-\alpha_{\rho}\zeta\sqrt{N}+s\sqrt{\tau(1-\tau)}+\sqrt{N}\tau}{\sqrt{\alpha_{\rho}^{2}\zeta+\frac{s}{\sqrt{N}}\sqrt{\tau(1-\tau)}+\tau}}+ \mathcal{O}\left(\frac{1}{\sqrt{N}}\right).
\end{equation}
We further neglect the terms which vanish for large $N$ in (\ref{before neglecting terms}), thereby leading to (see Lemma \ref{lemma_4} in \textbf{Appendix \ref{Appendix:CLT}}),
\begin{equation}\label{neglecting stuff with s}
    w(s) = \frac{\sqrt{N}(\tau-\alpha_{\rho}\zeta)+s\sqrt{\tau(1-\tau)}}{\sqrt{\alpha_{\rho}^{2}\zeta+\tau}} + \mathcal{O}\left(\frac{1}{\sqrt{N}}\right).
\end{equation}
We can also neglect the second term in the numerator of (\ref{neglecting stuff with s}) that involves the integration variable $s$. In fact, although $s$ grows large inside the integral, the exponential makes the integrand function vanish for large-magnitude values of $s$. Small values of $s$, however, can also be neglected for large values of $N$ (i.e., in the asymptotic regime). Finally, our approximation for $w(s)$ makes it independent of $s$ (see Lemma \ref{lemma_5} in \textbf{Appendix \ref{Appendix:CLT}}):
\begin{equation}\label{final simplification of w(s)}
    w ~=~ \frac{\sqrt{N}(\tau-\alpha_{\rho}\zeta)}{\sqrt{\alpha_{\rho}^{2}\zeta+\tau}} + \mathcal{O}(1).
\end{equation}
Consequently, one can take $Q(w(s))$ outside of the integral in (\ref{approximation of Pi}) (see Lemma \ref{lemma_6} in \textbf{Appendix \ref{Appendix:CLT}}):
\begin{equation}\label{taking Q out}
    p_{e} ~=~ 1-  Q(w)\int_{-\infty}^{\infty}\frac{1}{\sqrt{2\pi}}e^{-\frac{s^2}{2}}ds + \mathcal{O}\left(\frac{1}{\sqrt{N}}\right),
\end{equation}
which simplifies to: 
\begin{equation}\label{Asymptotic Probability of Success}
    p_{e} ~=~ 1 - Q(w) + \mathcal{O}\left(\frac{1}{\sqrt{N}}\right).
\end{equation}

\section{ Trade-Off Between AoI and Spectral Efficiency}\label{section:spectral_efficiency}

In this Section, we characterize the trade-off between the achievable spectral efficiency and the AoI in multiuser systems with a large-scale antenna array at the BS. It is shown that as the number of users, $N$, and the number of antennas, $M$, increase while keeping their ratio constant (i.e. $\zeta = \frac{M}{N}$) the maximum achievable spectral efficiency  approaches a well-characterized limit for any fixed AoI. The trade-off is manifested by making an observation that spectral efficiencies above the established limit can only be achieved by increasing the overall system AoI. To that end, we rewrite (\ref{definition of network AoI}) more explicitly as a function of the system parameters (see Lemma \ref{lemma_7} in \textbf{Appendix \ref{Appendix:CLT}}):
\begin{equation}\label{Clean AoI}
    \Delta(\zeta,N,\rho,\tau) ~=~ \frac{1}{\tau\left(1-Q\left(\frac{\sqrt{N}(\alpha_{\rho}\zeta-\tau)}{\sqrt{\alpha_{\rho}^2 \zeta +\tau}}\right)\right)} + \mathcal{O}\left(\frac{1}{\sqrt{N}}\right),
\end{equation}
from which it follows, in the limit, that the minimum AoI for a given attempt probability $\tau$ is given by\footnote{One can also numerically optimize (\ref{Clean AoI}), ignoring the error term, and find the $\tau$ that minimizes (\ref{Clean AoI}). Ignoring the error term will have minimal affect in this optimization as seen in the real-time simulation in Fig. \ref{fig:AoI_vs_rho}.}:
\begin{equation}\label{min AoI_tau}
    \Delta_\textrm{min}(\tau) ~=~ \frac{1}{\tau}.
\end{equation}
 {We start by defining, for a given $N$, $\tau$, and $\epsilon>0$, the set:
\begin{equation}\label{Set of achievable rates}
\Psi_{\epsilon} ~\triangleq~ \{\rho\in \Re^{+} ~|~p_{e} < \epsilon\},
\end{equation}
as the set of all achievable spectral efficiencies for which the probability of error is less than $\epsilon$. Note also that the condition $p_e < \epsilon$ implies that $\Delta(\zeta, N, \rho, \tau) < \frac{1}{\tau(1-\epsilon)}+\mathcal{O}\big(\frac{1}{\sqrt{N}}\big)$. We illustrate fundamental trade-offs in finite user case in the following theorem.
\begin{theorem}\label{supremum theorem}
For any $0<\tau<1$, $\zeta >0$, and $\epsilon>0$, there exist $N_0\in \mathbb{N}$ such that for any 
$N>N_{0}$, the set $\Psi_{\epsilon}$ is non-empty with a supremum
\begin{eqnarray}\label{Supremum}
    \rho_{N}^{*} ~\triangleq~\sup_{\rho}\Psi_{\epsilon} = &\\& \!\!\!\!\!\!\!\!\!\!\!\!\!\!\!\!\!\!\!\!\!\!\!\!\!\!\!\!\!\!\!\!\!\!\!\!\!\!\!\!\!\!\!\!\! \log_2\left(\!\!{1+\frac{\zeta-\frac{Q^{-1}(\epsilon)^2}{N}}{\tau+\sqrt{\tau^2+\tau\left(1-\frac{Q^{-1}(\epsilon)^2}{N\zeta}\right)\left(\frac{Q^{-1}(\epsilon)^2}{N}-\tau\right)}}}\right) + \mathcal{O}\left(\frac{1}{N^{1.5}}\right). \nonumber
\end{eqnarray}
\end{theorem}
\begin{proof}
As shown in \textbf{Appendix \ref{Appedix: Theorem 3 conditions}}, the condition that $p_{e}<\epsilon$ leads to $\alpha_{\rho}>\alpha_{\rho}^{+}(\epsilon_1)$ where:
\begin{equation}\label{positive root}
    \alpha_{\rho}^{+}(\epsilon_1) ~=~ \frac{\tau+\sqrt{\tau^2 -\tau(1-\frac{Q^{-1}(\epsilon_1)^2}{N\zeta})(\tau-\frac{Q^{-1}(\epsilon_1)^2}{N})}}{\zeta-\frac{Q^{-1}(\epsilon_1)^2}{N}},
\end{equation}
and $\epsilon_1 = \epsilon+\mathcal{O}\big(1/\sqrt{N}\big)$. Due to the differentiability of $Q^{-1}$, the error term can be taken out of $Q^{-1}$ in (\ref{positive root}) thereby leading to:
\begin{eqnarray}\label{positive root 2}
    \alpha_{\rho}^{+}(\epsilon_1)=&\nonumber\\&   \!\!\!\!\!\!\!\!\!\!\!\!\!\!\!\!\!\!\!\!\!\!\!\!\!\! \frac{\tau+\sqrt{\tau^2 -\tau\left(1-\frac{Q^{-1}(\epsilon)^2}{N\zeta}+\mathcal{O}\big(\frac{1}{N^{1.5}}\big)\right)\left(\tau-\frac{Q^{-1}(\epsilon)^2}{N}+\mathcal{O}\big(\frac{1}{N^{1.5}}\big)\right)}}{\zeta-\frac{Q^{-1}(\epsilon)^2}{N}+\mathcal{O}\big(\frac{1}{N^{1.5}}\big)}.  
\end{eqnarray}
Recalling the definition of $\alpha_{\rho}$, we see that $\alpha_{\rho}>\alpha_{\rho}^{+}(\epsilon_1)$ is equivalent to:
\begin{equation}\label{Condition so Pe 0}
    \rho ~<~ \log_2{\left(1+\frac{1}{\alpha_{\rho}^{+}(\epsilon_1)}\right)}.
\end{equation}
Again, due to the differentiability of the logarithm and due to $\epsilon$ being greater than $\epsilon_{0}(N,\zeta)$ the error term can be taken out of the logarithm and we have:
\begin{equation}\label{Condition so Pe 0 re-written}
    \rho ~<~ \log_2{\left(1+\frac{1}{\alpha_{\rho}^{+}(\epsilon)}\right)}+ \mathcal{O}\left(\frac{1}{N^{1.5}}\right).
\end{equation}
Therefore, $\Psi_{\epsilon}$ can be re-written as:
\begin{equation}\label{Theorem_3_before_the_end}
    \Psi_{\epsilon} ~=~ \left\{\rho\in \Re^{+} ~\middle|~ \rho < \log_2{\left(1+\frac{1}{\alpha_{\rho}^{+}(\epsilon)}\right)+ \mathcal{O}\left(\frac{1}{N^{1.5}}\right)}\right\}.
\end{equation}
Note that the upper bound on $\rho$ is always positive as we assume $\epsilon>\epsilon_{0}(N,\zeta)$ and hence $\Psi_{\epsilon}$ is non-empty and its is supremum is given by (\ref{Supremum}).
\end{proof}}
 {A special case of \textbf{Theorem \ref{supremum theorem}} wherein the error probability vanishes in the limit is described in the following two theorems.}
\begin{theorem}\label{Age Capacity}
(Achievability) For any $0<\tau<1$ and $\zeta>0$, we define the age-limited capacity as 
\begin{equation}\label{Age_Capacity_with_tau}
    C_{\tau,\zeta} ~=~ \log_2{\left(1+\frac{\zeta}{\tau}\right)},
\end{equation}
such that for any spectral efficiency, $\rho < C_{\tau,\zeta}$, the error probability, $p_{e} \xrightarrow[]{} 0$, and the AoI, $\Delta(\zeta,N,\rho,\tau) \xrightarrow[]{} \Delta_\textrm{min}(\tau)$, as $N\xrightarrow[]{} \infty$.
\end{theorem}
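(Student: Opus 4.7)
The plan is to leverage the asymptotic PEP characterization (\ref{Asymptotic Probability of Success}) together with the AoI formula (\ref{Clean AoI}) and show that the hypothesis $\rho < C_{\tau,\zeta}$ is exactly what drives the argument of the $Q$-function to $-\infty$, causing the outage probability to vanish in the large system limit.

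First I would translate the rate condition into a condition on $\alpha_\rho$. Since $\alpha_\rho = 1/(2^\rho - 1)$ is strictly decreasing in $\rho$, the strict inequality $\rho < \log_2(1 + \zeta/\tau)$ is equivalent to $2^\rho - 1 < \zeta/\tau$, i.e.
\begin{equation}
\alpha_\rho \zeta ~>~ \tau, \quad\text{equivalently}\quad \tau - \alpha_\rho \zeta ~<~ 0.
\end{equation}
Crucially, the gap $\alpha_\rho \zeta - \tau$ is a strictly positive constant that depends only on $\rho$, $\tau$, $\zeta$, and not on $N$. Likewise the denominator $\sqrt{\alpha_\rho^2 \zeta + \tau}$ in (\ref{final simplification of w(s)}) is a strictly positive constant in $N$.

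Next I would substitute into the asymptotic PEP expression (\ref{Asymptotic Probability of Success}). With $w = \sqrt{N}(\tau - \alpha_\rho \zeta)/\sqrt{\alpha_\rho^2 \zeta + \tau} + \mathcal{O}(1)$, the dominant term scales as $-c\sqrt{N}$ for a positive constant $c = (\alpha_\rho \zeta - \tau)/\sqrt{\alpha_\rho^2 \zeta + \tau}$, so $w \to -\infty$ as $N \to \infty$. By the tail behaviour of the standard normal, $Q(w) \to 1$, and therefore
\begin{equation}
p_e ~=~ 1 - Q(w) + \mathcal{O}\!\left(\tfrac{1}{\sqrt{N}}\right) ~\xrightarrow[N\to\infty]{}~ 0.
\end{equation}

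Finally, plugging the same conclusion into (\ref{Clean AoI}), whose $Q$-function argument is the negative of $w$ and therefore tends to $+\infty$, one gets $Q(\sqrt{N}(\alpha_\rho\zeta - \tau)/\sqrt{\alpha_\rho^2\zeta+\tau}) \to 0$, so
\begin{equation}
\Delta(\zeta, N, \rho, \tau) ~=~ \frac{1}{\tau\bigl(1 - Q(\cdot)\bigr)} + \mathcal{O}\!\left(\tfrac{1}{\sqrt{N}}\right) ~\xrightarrow[N\to\infty]{}~ \frac{1}{\tau} ~=~ \Delta_{\mathrm{min}}(\tau),
\end{equation}
which completes the achievability claim. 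The proof is essentially a limit computation once (\ref{Asymptotic Probability of Success}) and (\ref{Clean AoI}) are in hand; the only mildly delicate point is verifying that the $\mathcal{O}(1)$ residual inside $w$ and the $\mathcal{O}(1/\sqrt{N})$ residual outside $Q(\cdot)$ are both swamped by the $\sqrt{N}$-scale drift of the argument, which I would handle by noting that for any fixed $\rho < C_{\tau,\zeta}$ the constant $c$ above is bounded away from zero, so the exponentially fast decay of the Gaussian tail dominates any bounded perturbation. No strong-converse machinery is needed here; that is handled separately in Theorem \ref{Strong Converse}.
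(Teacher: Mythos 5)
Your proposal is correct and follows essentially the same route as the paper: both reduce the claim to the sign of $\alpha_\rho\zeta-\tau$ in the $Q$-function argument of (\ref{Asymptotic Probability of Success}) and (\ref{Clean AoI}), and let the $\sqrt{N}$ drift send that argument to $\pm\infty$. The only cosmetic difference is that the paper writes $\rho=C_{\tau,\zeta}-\delta$ and evaluates $\varphi(C_{\tau,\zeta}-\delta)$ explicitly to exhibit positivity, whereas you obtain the same strict inequality $\alpha_\rho\zeta>\tau$ directly from the monotonicity of $\alpha_\rho$ in $\rho$.
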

\begin{proof}
Note that the second term in (\ref{Clean AoI}) goes to zero in the limit as $N\to\infty$ and so the AoI is determined by the first term. Given the parameters $\tau$ and $\zeta$, we see that the AoI in (\ref{Clean AoI}) is monotonically increasing with $p_e$. Therefore, as $p_{e} \xrightarrow[]{} 0$ the AoI $\Delta(\zeta,N,\rho,N) \xrightarrow[]{} \Delta_\textrm{min}(\tau)$. Now, fix $\delta > 0$ and $\rho \,=\, C_{\tau,\zeta} - \delta$. The probability of error is determined by the Q-function or equivalently its argument. Hence, for a given $N$ and a given value of
\begin{equation}\label{Pe definition finite}
     \varphi(\rho) ~\triangleq~  \frac{\alpha_{\rho}\zeta-\tau}{\sqrt{\alpha_{\rho}^2 \zeta +\tau}},
\end{equation}
the probability of error is well specified. 
Plugging $\rho$ in (\ref{Pe definition finite}), it follows that:
\begin{equation}\label{Theorem_1:before_the_end}
    \varphi(C_{\tau,\zeta} - \delta) ~=~ \frac{(\zeta + \tau)(1-2^{-\delta})}{\sqrt{\zeta+\tau(2^{(C_{\tau,\zeta}-\delta)}-1)^2}},
\end{equation}
which is always positive. Therefore, as $N \xrightarrow[]{} \infty$ the argument inside the Q-function approaches $+\infty$; hence $p_{e} \xrightarrow[]{} 0$ and $\Delta(\zeta, N, \rho, \tau) \xrightarrow[]{} \Delta_\textrm{min}(\tau)$.
\end{proof}

\begin{theorem}\label{Strong Converse}
Given $\tau$, $\zeta$ and any spectral efficiency $\rho> C_{\tau,\zeta}$, the error probability, $p_{e}\xrightarrow[]{}1$, and the AoI $\Delta(\zeta, N, \rho, \tau)\xrightarrow[]{}\infty$ as, $N\xrightarrow[]{}\infty$.
\end{theorem}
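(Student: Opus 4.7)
The plan is to mirror the proof of \textbf{Theorem \ref{Age Capacity}}, retaining the same asymptotic quantities but exploiting the reversed sign of $\varphi(\rho) \triangleq (\alpha_{\rho}\zeta-\tau)/\sqrt{\alpha_{\rho}^{2}\zeta+\tau}$ introduced in (\ref{Pe definition finite}). From (\ref{Clean AoI}), the AoI reads $\Delta(\zeta,N,\rho,\tau) = 1/[\tau(1-Q(\sqrt{N}\varphi(\rho)))] + \mathcal{O}(1/\sqrt{N})$, while combining (\ref{Asymptotic Probability of Success}) with the identity $1-Q(w)=Q(-w)$ yields $p_{e} = Q(\sqrt{N}\varphi(\rho)) + \mathcal{O}(1/\sqrt{N})$. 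Both conclusions of the theorem will therefore follow from a single asymptotic statement: $\sqrt{N}\varphi(\rho)\to-\infty$ whenever $\rho > C_{\tau,\zeta}$.

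To establish that statement I would control the sign of $\alpha_{\rho}\zeta-\tau$. Since $\alpha_{\rho} = 1/(2^{\rho}-1)$ is strictly decreasing in $\rho$ and takes the value $\tau/\zeta$ exactly at $\rho = C_{\tau,\zeta}$, any $\rho > C_{\tau,\zeta}$ forces $\alpha_{\rho} < \tau/\zeta$, so that $\alpha_{\rho}\zeta-\tau < 0$ and hence $\varphi(\rho) < 0$. Writing $\rho = C_{\tau,\zeta}+\delta$ with $\delta > 0$ and repeating the substitution that produced (\ref{Theorem_1:before_the_end}) gives
\begin{equation*}
\varphi(C_{\tau,\zeta}+\delta) ~=~ \frac{(\zeta+\tau)(2^{-\delta}-1)}{\sqrt{\zeta+\tau\bigl(2^{C_{\tau,\zeta}+\delta}-1\bigr)^{2}}},
\end{equation*}
which is a strictly negative constant that does not depend on $N$. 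Consequently $\sqrt{N}\varphi(\rho)\to -\infty$ as $N\to\infty$.

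Monotonicity of $Q$ then yields $Q(\sqrt{N}\varphi(\rho))\to 1$, so after absorbing the bounded $\mathcal{O}(1/\sqrt{N})$ correction we obtain $p_{e}\to 1$. For the age, $1-Q(\sqrt{N}\varphi(\rho)) = Q(-\sqrt{N}\varphi(\rho)) \to 0$ at a Gaussian tail rate, so the leading term in (\ref{Clean AoI}) diverges while the additive $\mathcal{O}(1/\sqrt{N})$ residue vanishes, giving $\Delta(\zeta,N,\rho,\tau)\to\infty$. There is no serious obstacle here: the only bookkeeping is to keep the Berry-Esseen slack that sits outside $Q(\cdot)$ separate from the exponentially small tail that sits inside it, since the former stays bounded while the latter drives the denominator of the age expression to zero. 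This is precisely the symmetric counterpart of the achievability argument in \textbf{Theorem \ref{Age Capacity}}, and once the sign of $\varphi(\rho)$ is pinned down the rest of the proof is essentially free.
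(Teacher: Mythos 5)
Your proposal is correct and follows essentially the same route as the paper: set $\rho=C_{\tau,\zeta}+\delta$, observe that $\varphi(\rho)$ is a strictly negative constant independent of $N$, and conclude that the Q-function argument tends to $-\infty$. One minor slip: the numerator of $\varphi(C_{\tau,\zeta}+\delta)$ should be $(\zeta+\tau)(1-2^{\delta})=-(\zeta+\tau)(2^{\delta}-1)$ rather than $(\zeta+\tau)(2^{-\delta}-1)$, but since your monotonicity argument for $\alpha_{\rho}$ already pins down the sign of $\varphi(\rho)$ without relying on this formula, the conclusion is unaffected.
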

\begin{proof}
We prove this in a similar way as we did for \textbf{Theorem \ref{Age Capacity}}. In fact, we choose an arbitrary $\delta>0$ and set $\rho = C_{\tau,\zeta}+\delta$. Plugging the latter in (\ref{Pe definition finite}) and simplifying we obtain:
\begin{equation}
    \varphi(C_{\tau,\zeta} + \delta) ~=~ -~\frac{(\zeta + \tau)(2^{\delta}-1)}{\sqrt{\zeta+\tau(2^{(C_{\tau,\zeta}+\delta)}-1)^2}}.
\end{equation}
Now, the argument inside the Q-function is negative for all values of $N$ and approaches $-\infty$ as $N \xrightarrow[]{} \infty$, from which it follows that $p_{e} \xrightarrow[]{} 1$ and $\Delta(\zeta, N, \rho, \tau) \xrightarrow[]{}\infty$.
\end{proof}

\begin{corollary}\label{corrolary:limit_capacity}
For any $\epsilon>0$, the age-limited capacity defined in (\ref{Age_Capacity_with_tau}) is given by 
\begin{equation}
    C_{\tau,\zeta} ~=~ \lim_{N\xrightarrow[]{}\infty}{\rho_{N}^{*}}.
\end{equation}
\end{corollary}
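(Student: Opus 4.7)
The plan is to take the limit directly in the closed-form supremum expression of Theorem~\ref{supremum theorem}. The strategy rests on the fact that every term involving $N$ in $\rho_N^\ast$ either vanishes or leaves a clean algebraic residue, so the result follows by continuity of $\log_2$ and the square root once the vanishing terms are handled carefully.

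First, I would fix $\epsilon > 0$. Since $Q^{-1}(\epsilon)$ is a finite constant once $\epsilon$ is chosen, each of the quantities $Q^{-1}(\epsilon)^2/N$, $Q^{-1}(\epsilon)^2/(N\zeta)$, and the residual $\mathcal{O}(1/N^{1.5})$ in Theorem~\ref{supremum theorem} converges to $0$ as $N \to \infty$. For $N$ large enough we also have $\epsilon > \epsilon_0(N,\zeta)$ (since the threshold decays with $N$), so the formula for $\rho_N^\ast$ is valid along the tail of the sequence. This lets me apply the limit inside the $\log_2$ and square root by their continuity.

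Next I would evaluate the limit of the argument of the logarithm piece by piece. The numerator $\zeta - Q^{-1}(\epsilon)^2/N$ tends to $\zeta$. For the denominator, the expression under the square root is
\begin{equation*}
\tau^2 + \tau\left(1 - \tfrac{Q^{-1}(\epsilon)^2}{N\zeta}\right)\left(\tfrac{Q^{-1}(\epsilon)^2}{N} - \tau\right) \;\longrightarrow\; \tau^2 + \tau \cdot 1 \cdot (-\tau) \;=\; 0,
\end{equation*}
so the denominator converges to $\tau + \sqrt{0} = \tau$. Combining numerator and denominator and invoking continuity of $\log_2$ at $1 + \zeta/\tau > 1$ gives
\begin{equation*}
\lim_{N \to \infty} \rho_N^\ast \;=\; \log_2\!\left(1 + \frac{\zeta}{\tau}\right) \;=\; C_{\tau,\zeta},
\end{equation*}
which is the claim.

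The only delicate point is the cancellation inside the square root: two $\mathcal{O}(1)$ quantities combine to yield $0$ in the limit, so one should verify that the $\mathcal{O}(1/N^{1.5})$ residuals from Theorem~\ref{supremum theorem} do not pollute this cancellation. This is handled by observing that these residuals enter additively outside the radical and inside each factor, and since the square root is continuous at $0$ from the right (the expression under it stays non-negative by the definition of $\alpha_\rho^+(\epsilon_1)$ as a valid root in Appendix~\ref{Appedix: Theorem 3 conditions}), the vanishing perturbations transfer cleanly through. Beyond that bookkeeping, the proof is essentially a substitution, so no further machinery is required.
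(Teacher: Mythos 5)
Your proposal is correct and follows essentially the same route as the paper's own proof: both establish that $\rho_N^*$ is well defined for large $N$ because $\epsilon_0(N,\zeta)$ decays, and then pass the limit inside $\log_2$ by continuity at $1+\zeta/\tau$. You simply carry out the algebra more explicitly (noting the cancellation $\tau^2 + \tau\cdot 1\cdot(-\tau)=0$ under the radical), which the paper leaves implicit.
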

\begin{proof}
We see that as $N\xrightarrow[]{}\infty$ the threshold, $\epsilon_{0}(N,\zeta)$ goes to zero as it decays with $N$. Therefore, for any $\epsilon>0$, $\rho_{N}^{*}$ is well defined. As the function $\log_2{x}$ is continuous at $x = 1+\frac{\zeta}{\tau}$, one can take the limit inside its argument in $\rho_{N}^*$, from which the corollary follows.
\end{proof}

\begin{remark}\label{Shannon Capacity comparison}
It is interesting to observe the similarity between the age-limited capacity and the capacity of the AWGN channel. In the age-limited capacity, the ratio $\frac{\zeta}{\tau}$ plays the role of the SNR in the AWGN Capacity. In working with asymptotic scenarios where we have both a large number of users and a large number of antennas, the noise variance becomes negligible. In this asymptotic interference-limited scenario the decoding error probability is dominated by $\tau$. It is insightful in this case to view $\tau$ as the noise variance. Similarly, the ratio, $\zeta$, of the number of antennas to the number of users plays the role of the transmit power.
\end{remark}
\begin{remark}
Note also, that the age-limited capacity, $C_{\tau,\zeta} = \log_2\left(1+\frac{\zeta}{\tau}\right)$, is parameterized by $\tau$ and $\zeta$ and can be increased by decreasing the value of $\tau$. Now, for any spectral efficiency below $C_{\tau,\zeta}$ our analysis reveals that the age-limited capacity can be approached as $N\xrightarrow[]{}\infty$ in which case the minimum achievable AoI is given by (\ref{min AoI_tau}). Thus, while the aggregate spectral efficiency can be made large by decreasing $\tau$, the price is an undesired  increase in the AoI. This should be expected on intuitive grounds since as $\tau$ becomes small, the users that are lucky to transmit in a given slot can be easily separated in the spatial domain by making use of a large-scale antenna array at the BS.
\end{remark}
\section{Analysis with Imperfect CSI}\label{section 5}
We incorporate imperfect CSI into our analysis by using a channel estimator and writing the error in channel estimation as $\boldsymbol{\mathcal{E}} \triangleq \mathbf{\widehat{H}}-\mathbf{H}$, where $\mathbf{\widehat{H}}$ is the channel estimate. We assume that $\mathbf{\widehat{H}}$ and $\boldsymbol{\mathcal{E}}$ are independent and $\boldsymbol{\mathcal{E}}_{i}\sim\mathcal{CN}(\mathbf{{0}}_{M},\sigma_{p}^2 \mathbf{{I}}_{M})$,  $\mathbf{\widehat{h}}_{i}\sim\mathcal{CN}(\mathbf{{0}}_{M},(1 - \sigma_{p}^2) \mathbf{{I}}_{M})$ where $\boldsymbol{\mathcal{E}}_{i}$ and $\mathbf{\widehat{h}}_{i}$ are the $i^{th}$ columns of  $\boldsymbol{\mathcal{E}}$ and $\mathbf{\widehat{H}}$ respectively and $\sigma_{p}^2$ is the mean-squared error (MSE) in channel estimation. We further assume that the columns of $\boldsymbol{\mathcal{E}}$ and $\mathbf{\widehat{H}}$ are independent. In the analysis we  append the estimation error into the noise and interference terms. The MSE, $\sigma_p^2$, of the MMSE estimator could be obtained from the state evolution of the AMP algorithm \cite{liu2018massiveII}.
\subsection{Maximal-Ratio Combining}
In the case of MRC our system model becomes:
\begin{equation}
    \mathbf{y}_n ~=~ \sum_{i=1}^{N}\mathbf{\widehat{h}}_i\varepsilon_{i}x_{i,n}~-~\sum_{i=1}^{N}\boldsymbol{\mathcal{E}}_i\varepsilon_{i}x_{i,n}~+~\mathbf{w}_n.
\end{equation}
From this view, we can write the outage probability as:
\begin{equation}\label{eq:MRC_outage_csi}
    p_{e,i} ~=~ 1 - \Pr\{ \rho_i ~<~ C_{\textrm{MRC}} \},
\end{equation}
where 
\begin{eqnarray}
    C_{\textrm{MRC}} \triangleq &\nonumber\\ & \nonumber\!\!\!\!\!\!\!\!\!\!\!\!\!\!\!\!\!\!\!\!\!\!\!\! \log_2\left(1+\frac{\|\mathbf{\widehat{h}}_i\|_2^4 P_i}{\|\mathbf{\widehat{h}}_i\|_2^2 \sigma_{\mathbf{w}}^2 ~+~\sum_{\substack{j=1 \\ j \neq i}}^{N}|{\mathbf{\widehat{h}}_i}^{\textsf{H}} \mathbf{\widehat{h}}_j|^2 \varepsilon_j P_j ~+~\sum_{j=1}^{N}|{\mathbf{\widehat{h}}_i}^{\textsf{H}}\boldsymbol{\mathcal{E}}_j|^{2}\varepsilon_{j}P_{j}}\right).
\end{eqnarray}
Conditioning on $k$ out of the total $N$ users being active and considering a symmetric system as before, i.e., $P_i$=$P ~\forall i$, and simplifying we write the conditional outage probability, $p_{e|k}$, as:
\begin{eqnarray}\label{eq:outage_conditional_csi}
    \!\!\!1-p_{e|k} = & \nonumber\\ & \!\!\!\!\!\!\!\!\!\!\!\!\!\!\!\!\!\!\!\!\!\!\!\!\!\!\!\!\!\! \Pr\left\{ \alpha_{\rho} \|\mathbf{\widehat{h}}_i\|_2^2 -\sum_{j=1}^{k}|\widetilde{\mathbf{h}}_i^{\textsf{H}} \mathbf{\widehat{h}}_j|^2 - \sum_{j=1}^{k+1}|\widetilde{\mathbf{h}}_i^{\textsf{H}} \boldsymbol{\mathcal{E}}_{j}|^2 \geq \beta \right\},
\end{eqnarray}
where $\widetilde{\mathbf{h}}_i = \frac{\mathbf{\widehat{h}}_i}{\|\mathbf{\widehat{h}}_i\|_2}$. To make this convenient for use of the CLT we write (\ref{eq:outage_conditional_csi}) as:
\begin{equation}
    1-p_{e|k} ~=~ \Pr\left\{ \sum_{m=1}^{2M} S_{m} ~+~ \sum_{l=1}^{2k}V_{l} ~+~ \sum_{n=1}^{2(k+1)}T_{n} ~\geq~ \beta \right\},
\end{equation}
where $S_m\sim \Gamma(\frac{1}{2},\alpha_{\rho}(1-\sigma_{p}^2))$, $V_l\sim \Gamma(\frac{1}{2},(1-\sigma_{p}^2))$, and $T_n\sim \Gamma(\frac{1}{2},\sigma_{p}^2)$. We now apply the CLT and use the same techniques utilized previously to get the asymptotic results. We will disregard the error terms in this analysis to avoid redundancy. Applying the CLT the conditional outage probability can be written as:
\begin{equation}
    p_{e|k} \approx 1 - Q(w(k)),
\end{equation}
where $$w(k) = \frac{\beta - \alpha_{\rho}(1-\sigma_{p}^2)M+(1-\sigma_{p}^2)k+(k+1)\sigma_{p}^2}{\sqrt{\alpha_{\rho}^{2}(1-\sigma_{p}^2)^2 M+(1-\sigma_{p}^2)k+(k+1)\sigma_{p}^4}}.$$
Similar to the perfect CSI case, we make a normal approximation to the binomial distribution, $k = s\sqrt{(N-1)\tau(1-\tau)}+(N-1)\tau$, we approximate $N-1$ by $N$, and substitute $\zeta N = M$:
\begin{eqnarray}
    w(s) = &\nonumber\\& \nonumber\!\!\!\!\!\!\!\!\!\!\!\!\frac{(\beta+\sigma_{p}^2) - \alpha_{\rho}(1-\sigma_{p}^2)\zeta N+s\sqrt{N\tau(1-\tau)}+N\tau}{\sqrt{\alpha_{\rho}^{2}(1-\sigma_{p}^2)^2 \zeta N + ( (1-\sigma_{p}^2)^2 +\sigma_{p}^4) (s\sqrt{N\tau(1-\tau)}+N\tau) + \sigma_{p}^4}}.
\end{eqnarray}
Next we divide both the numerator and denominator by $\sqrt{N}$ and neglect terms that do not grow with $N$:
\begin{equation}
    w(s) = \frac{\sqrt{N}(\tau -\alpha_{\rho}(1-\sigma_{p}^2) \zeta)}{\sqrt{\alpha_{\rho}^2(1-\sigma_{p}^2)^2 \zeta +\tau ((1+\sigma_{p}^2)^2 + \sigma_{p}^4)}}.
\end{equation}
As $w(s)$ is now independent of $s$, averaging over the binomial distribution will have no affect and thus we can write the total outage probability of the system as:
\begin{equation}
    p_{e}~\approx~ Q\left( \frac{\sqrt{N}(\alpha_{\rho}(1-\sigma_{p}^2) \zeta - \tau)}{\sqrt{\alpha_{\rho}^2(1-\sigma_{p}^2)^2 \zeta +\tau ((1+\sigma_{p}^2)^2 + \sigma_{p}^4)}} \right).
\end{equation}
Similar to the perfect CSI case, we find the achievable rate for a given error probability of $\epsilon$ given by:
\begin{eqnarray}
    \rho^{\textrm{MRC}} \approx & \nonumber\\&\nonumber\!\!\!\!\!\!\!\!\!\!\!\!\!\!\!\!\!\!\!\!\!\!\!\!\log_{2}\left(1~+~\frac{1}{\frac{\tau}{\zeta}\frac{1}{(1-\sigma_{p}^2)} + \frac{Q^{-1}(\epsilon)\sqrt{\alpha_{\rho}^2(1-\sigma_{p}^2)^2 \zeta +\tau ((1+\sigma_{p}^2)^2 + \sigma_{p}^4)}}{\zeta (1+\sigma_{p}^2) \sqrt{N}}} \right),
\end{eqnarray}
which in the limit leads to:
\begin{equation}
    \rho^{\textrm{MRC}} ~=~ \log_{2}\left(1~+~\frac{\zeta}{\tau}({1-\sigma_{p}^2})\right).
\end{equation}

\subsection{Zero-Forcing}
We first start by rewriting the system model given in (\ref{received signal}) in a matrix-vector product form given by:
\begin{equation}\label{eq: matrix_vector full channel}
    \mathbf{y}_{n} = \mathbf{H}\mathbf{x}_{n}~+~\mathbf{w}_n,
\end{equation}
where $\mathbf{H} = [\varepsilon_{1}\mathbf{h}_1,\ldots,\varepsilon_{N}\mathbf{h}_N]$ and $\mathbf{x}_{n} = [x_{1,n},\ldots,x_{N,n}]^{\textsf{T}}$. Conditioned on $k$ out of the $N$ total users being active we can write ZF receiver as:
\begin{equation}
    \mathbf{W}_{\mathrm{ZF}}^{\textsf{H}} = \frac{1}{\sqrt{P}}(\mathbf{H}^{\textsf{H}}\mathbf{H})^{-1}\mathbf{H}^{\textsf{H}},
\end{equation}
where now $\mathbf{H}$ is only composed of the channels of the $k$ active transmitters. Applying this linear receiver to $\mathbf{y}_{n}$ we have:
\begin{equation}
    \mathbf{W}_{\mathrm{ZF}}^{\textsf{H}} \mathbf{y}_{n} = \mathbf{s}_{n}~+~\mathbf{v}_n,
\end{equation}
where now $\mathbf{s}_{n}\sim\mathcal{{CN}}(\mathbf{{0}}_{k},\mathbf{I}_{k})$ and $\mathbf{v}_{n}$ is coloured noise with
$\mathbb{E}[\mathbf{v}_{n}\mathbf{v}_{n}^{\textsf{H}}|\mathbf{H}] = \beta(\mathbf{H}^{\textsf{H}}\mathbf{H})^{-1}$. In this case, the $i^{th}$ user will see a scalar channel and the maximum achievable rate over this channel will be given by $\log_{2}(1+\textrm{SNR}_{i}^{\textrm{ZF}})$
where $\textrm{SNR}_{i}^{\textrm{ZF}}$ is given by:
\begin{equation}
    \textrm{SNR}_{i}^{\mathrm{ZF}} = \frac{\mathbb{E}[\mathbf{s}_{n}\mathbf{s}_{n}^{\textsf{H}}]_{i,i}}{\mathbb{E}[\mathbf{v}_{n}\mathbf{v}_{n}^{\textsf{H}}|\mathbf{H}]_{i,i}} = \frac{1}{\beta(\mathbf{H}^{\textsf{H}}\mathbf{H})_{i,i}^{-1}}.
\end{equation}
$\textrm{SNR}_{i}^{\textrm{ZF}}$ is found to follow a Chi-Squared distribution with $2(M-k+1)$ degrees of freedom (DOF). For the case that no other users are active besides user $i$, $\textrm{SNR}_{i}^{\textrm{ZF}}$ follows a Chi-Squared distribution with $2M$ DOF \cite{heath2018foundations}. Therefore, averaging over all of the users we can write the error probability as:
\begin{eqnarray}\label{eq:ZF_finite_pe}
p_{e} ~=~ (1-\tau)^{N-1}\int_{0}^{2^{\rho}-1}f_{\chi_{2M}^2}(\eta)d\eta &\\ & \!\!\!\!\!\!\!\!\!\!\!\!\!\!\!\!\!\!\!\!\!\!\!\!\!\!\!\!\!\!\!\!\!\!\!\!\!\!\!\!\!\!\!\!\!\!\!\!\!\!\!\!\!\!\!\!\!\!\!\!\!\!\!\!\!\!\!\!\!\!\!\!\!\!\!\!\!\!\!\!\!\!\!\!\!\!\!\!\!\!\! +\mathlarger{\sum_{k=1}^{N-1}}{N-1\choose k}\tau^{k}(1-\tau)^{N-1-k}\mathlarger{\int_{0}^{2^{\rho}-1}}f_{\chi_{2(M-k+1)}^2}(\eta)d\eta \nonumber,
\end{eqnarray}
from which the AoI simply follows from (\ref{definition of network AoI}).\\

\noindent To analyze our system asymptotically under ZF we write our system model as:
\begin{equation}\label{eq:ZF_CSI_model}
    \mathbf{y}_{n} = \sqrt{P(1-\sigma_{p}^2)}\mathbf{\widehat{H}}\mathbf{s}_{n}~-~\sqrt{P}\boldsymbol{\mathcal{E}}\mathbf{s}_{n}~+~\mathbf{w}_n.
\end{equation}
Conditioning on $k$ out of the $N$ users being active the ZF filter can be written as:
\begin{equation}
    \mathbf{W}_{\mathrm{ZF}}^{\textsf{H}} = \frac{1}{\sqrt{P(1-\sigma_{p}^2)}}(\mathbf{\widehat{H}}^{\textsf{H}}\mathbf{\widehat{H}})^{-1}\mathbf{\widehat{H}}^{\textsf{H}},
\end{equation}
where now $\mathbf{\widehat{H}}$ is only composed of the channels of the $k$ active transmitters. Applying this linear receiver to $\mathbf{y}_{n}$ we have:
\begin{equation}
    \mathbf{W}_{\mathrm{ZF}}^{\textsf{H}} \mathbf{y}_{n} = \mathbf{s}_{n}~+~\mathbf{v}_{n},
\end{equation}
where now $\mathbf{s}_{n}\sim\mathcal{{CN}}(\mathbf{{0}}_{k},\mathbf{I}_{k})$ and $\mathbf{v}_{n}$ is coloured noise with the estimation error term, i.e:
\begin{equation}
    \mathbf{v}_{n} ~=~ -\sqrt{P}\mathbf{W}_{\mathrm{ZF}}^{\textsf{H}}\boldsymbol{\mathcal{E}}\mathbf{s}_{n}~+~\mathbf{W}_{\mathrm{ZF}}^{\textsf{H}}\mathbf{w}_n.
\end{equation}
In this case, the $i^{th}$ user will see a scalar channel and the maximum achievable rate over this channel will be given by $\log_{2}(1+\textrm{SNR}_{i}^{\textrm{ZF}})$
where $\textrm{SNR}_{i}^{\textrm{ZF}}$ is given by:
\begin{equation}
    \textrm{SNR}_{i}^{\mathrm{ZF}} = \frac{\mathbb{E}[\mathbf{s}_{n}\mathbf{s}_{n}^{\textsf{H}}]_{i,i}}{\mathbb{E}[\mathbf{v}_{n}\mathbf{v}_{n}^{\textsf{H}}]_{i,i}} = \frac{1}{\mathbb{E}[\mathbf{v}_{n}\mathbf{v}_{n}^{\textsf{H}}]_{i,i}}.
\end{equation}
It is easy to show that:
\begin{equation}
    \mathbb{E}[\mathbf{v}_{n}\mathbf{v}_{n}^{\textsf{H}}]~=~\left( \frac{k \sigma_{p}^2 +  \beta}{1-\sigma_{p}^2}\right)(\mathbf{\widehat{H}}^{\textsf{H}}\mathbf{\widehat{H}})^{-1}.
\end{equation}
Therefore, we can now write the conditional outage probability as:
\begin{equation}
    p_{e|k}~=~\Pr\left\{  \frac{1}{\left( \frac{k \sigma_{p}^2 +  \beta}{1-\sigma_{p}^2}\right)(\mathbf{\widehat{H}}^{\textsf{H}}\mathbf{\widehat{H}})_{i,i}^{-1} } \leq \frac{1}{\alpha_{\rho}} \right\}
\end{equation}
For the asymptotic analysis we let $M$ and $N$ grow large while holding their ratio, $\zeta = M/N$, constant. In the case of ZF we additionally have the condition that $\zeta \geq 1$. We can see the first term in (\ref{eq:ZF_finite_pe}) goes to $0$ as $N$ grows large and we focus on the second term. We note that the RV:
\begin{equation}
    Z = \frac{\chi_{2(M-k+1)}^2 - (M-k+1)/\left( \frac{k \sigma_{p}^2 +  \beta}{1-\sigma_{p}^2}\right)}{\sqrt{M-k+1}/\left( \frac{k \sigma_{p}^2 +  \beta}{1-\sigma_{p}^2}\right)},
\end{equation}
tend to a normal distributions by the central limit theorem as the number of DOF grow large. Therefore we can rewrite the second integral in (\ref{eq:ZF_finite_pe}) as:
\begin{equation}\label{U integral}
    \int_{0}^{2^{\rho}-1}f_{\chi_{2(M-k+1)}^2}(\eta)d\eta ~\xrightarrow[]{}~ \int_{-\sqrt{M-k+1}}^{\lambda_{M,k}}f_{Z}(z)dz,
\end{equation}
where:
\begin{equation}
    \lambda_{M,k}~ =~ \frac{ k \sigma_{p}^2 +  \beta}{\alpha_{\rho}(1-\sigma_{p}^2)\sqrt{M-k+1}}-\sqrt{M-k+1}.\nonumber
\end{equation}
We can write the integral in (\ref{U integral}) as a difference of two Q-functions given by the upper and lower bounds of the integral, i.e:
\begin{equation}
    \int_{-\sqrt{M-k+1}}^{\lambda_{M,k}}f_{Z}(z)dz =Q\left(-\sqrt{M-k+1} \right) - Q(\lambda_{M,k}).
\end{equation}
As we let $M$ grow large for a fixed $k$ the term on the left goes to 1. Making the substitution $M = \zeta N$, approximating $N-1$ by $N$, and approximating the binomial distribution using the CLT through $k= s\sqrt{(N-1)\tau(1-\tau)}+(N-1)\tau$, we can write the total probability error as:
\begin{equation}
    p_{e,i} ~=~ 1 ~-~ \frac{1}{\sqrt{2 \pi}}\int_{-\infty}^{\infty}Q(w(s))e^{-\frac{s^2}{2}}ds,
\end{equation}
where $w(s)$ is given by:
\begin{eqnarray}
    \!\!\!\!\!\!\!\!\!\!\!\!w(s) = &\nonumber\\& \!\!\!\!\!\!\!\!\!\!\!\!\!\!\!\!\!\! \frac{\left(\frac{((2^{\rho}-1) \beta)}{1-\sigma_{p}^2} -1\right)- (s\sqrt{N\tau(1-\tau)}+N\tau)\left(\frac{(2^{\rho}-1) \sigma_{p}^2}{(1-\sigma_{p}^2)} +1\right) - \zeta N}{\sqrt{\zeta N-(s\sqrt{N\tau(1-\tau)}+N\tau)+1}}.
\end{eqnarray}
Dividing the numerator and denominator of $w(s)$ by $\sqrt{N}$ and neglecting terms that do not grow with $N$ we get:
\begin{equation}
    w(s) = \frac{\sqrt{N}\left( \tau\left(\frac{(2^{\rho}-1 )\sigma_{p}^2}{(1-\sigma_{p}^2)} +1\right) - \zeta  \right)}{\sqrt{\zeta - \tau}}.
\end{equation}
Finally, since this term does not not depend on $s$ we take it out of the integral and write the asymptotic packet error probability as:
\begin{equation}
    p_{e,i} = Q\left( \frac{\sqrt{N}\left(\zeta - \tau\left(\frac{(2^{\rho}-1 )\sigma_{p}^2}{(1-\sigma_{p}^2)} +1\right) \right)}{\sqrt{\zeta - \tau}}\right).
\end{equation}
Similarly we find the achievable rate for a given error probability of $\epsilon$:
\begin{equation}
    \rho^{\textrm{ZF}} \approx \log_{2}\left(1~+\frac{1-\sigma_{p}^2}{\sigma_{p}^2}\left( \frac{\zeta}{\tau} - 1 - \frac{Q^{-1}(\epsilon) \sqrt{\zeta-\tau}}{\tau \sqrt{N}} \right)~ \right),
\end{equation}
which in the limit leads to:
\begin{equation}
    \rho^{\textrm{ZF}} ~=~ \log_{2}\left(1~+\frac{1-\sigma_{p}^2}{\sigma_{p}^2}\left(\frac{\zeta}{\tau} - 1\right)~ \right).
\end{equation}


\section{Numerical Results and Discussion}\label{section_6}
We now illustrate the results found in the previous sections and further compare recent URA schemes against our bound. 
\subsection{Trade-Off Between AoI and Spectral Efficiency}
In order to determine the direct relationship between the spectral efficiency and the AoI, we start by re-writing (\ref{Clean AoI}) as a function of the spectral efficiency for a given probability of error, $p_{e} = \epsilon$. In this case, the AoI reduces simply to:
\begin{equation}\label{AoI as for fixed pe}
    \Delta(\zeta,N,\rho,\tau) ~=~ \frac{1}{\tau_{\epsilon}(1-\epsilon)} ~+~ \mathcal{O}\left(\frac{1}{\sqrt{N}}\right),
\end{equation}
where $\tau_{\epsilon}$ is found by solving for $\tau$ in the Q-function in (\ref{Clean AoI}) and is given by
\begin{eqnarray}\label{constant Pe tau}
    \tau_{\epsilon}=\alpha_{\rho}\zeta+ \frac{Q^{-1}(\epsilon)^{2}}{2N}&\nonumber\\&   \!\!\!\!\!\!\!\!\!\!\!\!\!\!\!\!\!\!\!\!\!\!\!\!\!\!\!\!\!\!\!\!\!\!\!\! - \sqrt{\left(\alpha_{\rho}\zeta+\frac{Q^{-1}(\epsilon)^{2}}{2N}\right)^{2}+ \alpha_{\rho}^{2}\left(\frac{Q^{-1}(\epsilon)^{2}\zeta}{N}-\zeta^{2}\right)}.
\end{eqnarray}
The identity in (\ref{constant Pe tau}) is valid for $\rho \geq \rho_\textrm{min}(\epsilon)$ where $\rho_\textrm{min}(\epsilon)$ is given in (\ref{Supremum}) evaluated at $\tau = 1$ ({this is seen when solving for $\tau$ from the Q-function and observing where the solution is valid}).
In Fig. \ref{fig:AoI_vs_rho}, we plot (\ref{AoI as for fixed pe}), ignoring the error terms, for $p_{e}=10^{-5}$, $\zeta = 0.3$, and $N = \{100,500,1000\}$. We also plot the case of infinite number of users with $p_{e} = 0$ 
\begin{figure}[h!]
    \centering
    \includegraphics[width=1\linewidth]{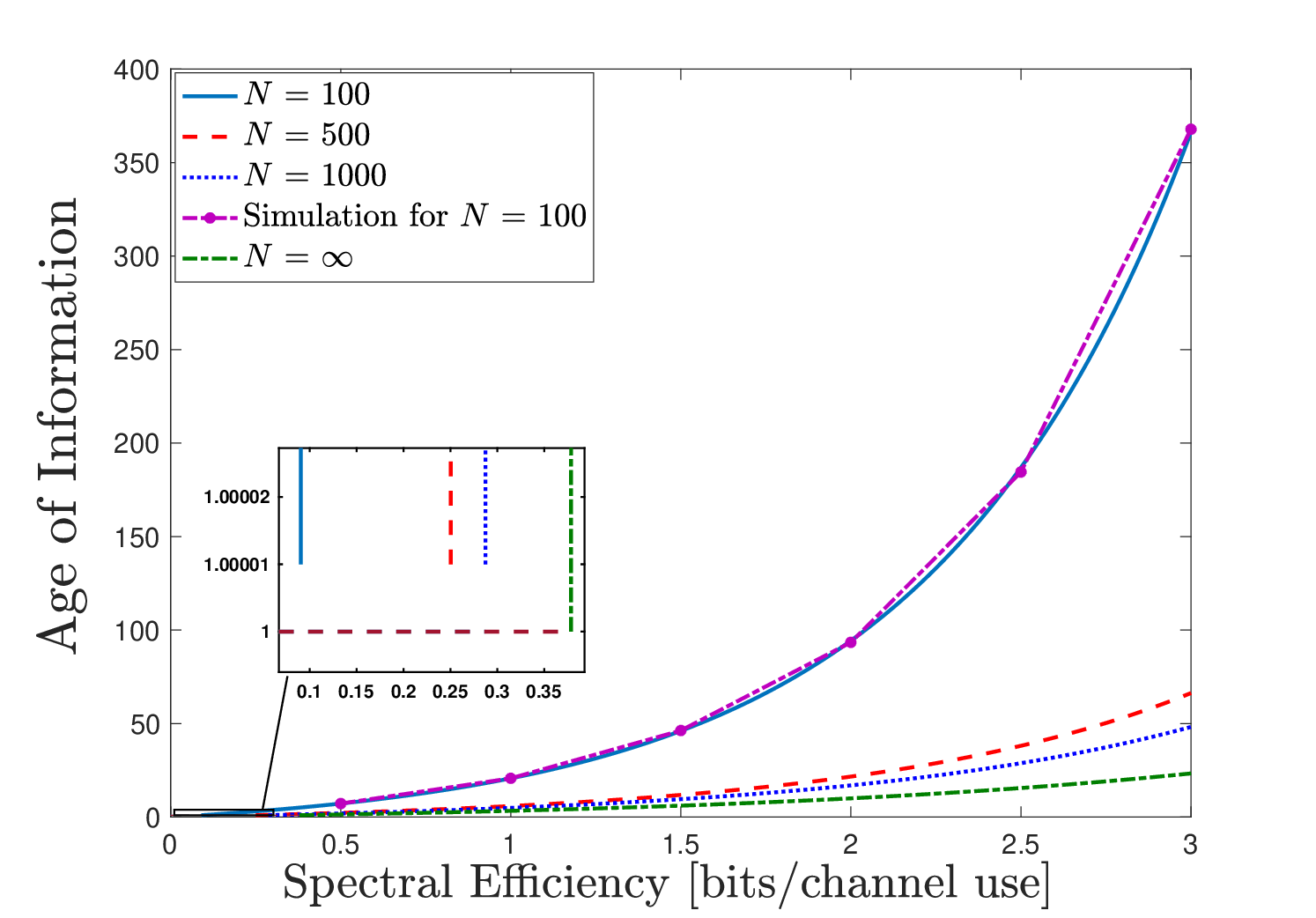} 
    \caption{AoI vs Spectral Efficiency at $p_{e} = 10^{-5}$ and $\zeta$ = 0.3. In the simulation, for $N=100$ we set the number of slots to be $10^{5}$.}  
    \label{fig:AoI_vs_rho}
\end{figure}
whose curve is obtained by taking $N\to\infty$ in (\ref{constant Pe tau}), i.e.:
\begin{equation}\label{optimum operating point}
    \lim_{N \to \infty}\tau_{\epsilon} ~=~ \alpha_{\rho}\zeta.
\end{equation}
 {Each spectral efficiency on the $N=\infty$ curve is the age-limited capacity for a given set of $\tau$ and $\zeta$.} In both the finite- and infinite-number-of-users scenarios, the AoI is minimized by setting $\tau$ to 1,  {as seen in (\ref{AoI as for fixed pe})}. In the finite case, however, the minimum AoI is limited by the probability of error and is given by:
\begin{equation}\label{minimum_age_epsilon}
    \Delta_{\epsilon} ~=~ \frac{1}{1-\epsilon} ~+~\mathcal{O}\left(\frac{1}{\sqrt{N}}\right).
\end{equation}
On the other hand, in the infinite-number-of-users regime the probability of error is driven to zero and the AoI takes the minimum value possible, $\Delta = 1$. This is  clearly seen in the zoomed portion of Fig. \ref{fig:AoI_vs_rho}. We also add a simulation for $N = 100$. In our simulation we choose a large number of time slots, $10^{5}$, and simulate the network AoI. Indeed, we can see from Fig. \ref{fig:AoI_vs_rho} that the error term in our approximations is small.
\subsection{Performance under Imperfect CSI} We compare the achievable rates for MRC and ZF for a given MSE in channel estimation in Fig. \ref{fig:rho_vs_mse}. Indeed as expected the ZF receiver performs better given that the channel estimate is good. However, the MRC receiver is more robust and is not as sensitive to imperfect channel estimates. Additionally, the constraint $\zeta \leq 1$ for ZF heavily limits its use in mMTC where the total number of users is much larger than the number of BS antennas. Similar results follow for both AoI and Outage performance.
\begin{figure}[h!]
    \centering
    \includegraphics[width=1\linewidth]{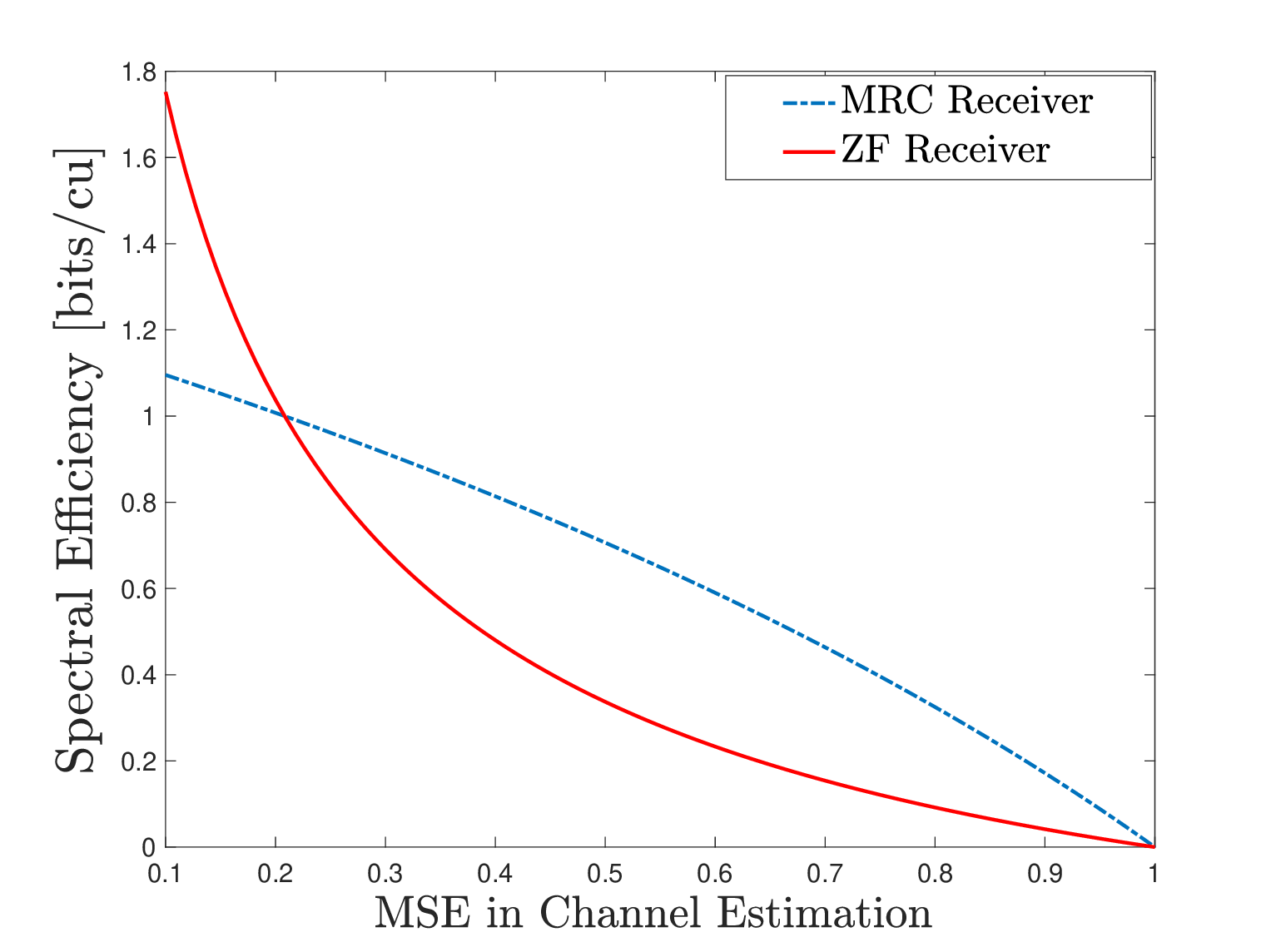} 
    \caption{Achievable rate against Channel Estimation Error, here we use $\zeta = 1.2$, and $\tau = 0.95$.}
    \label{fig:rho_vs_mse}
\end{figure}

\subsection{Application to Unsourced  Random Access (URA)}
The URA paradigm, initially analyzed in \cite{polyanskiy2017perspective}, in which the base station is tasked with providing multiple access to a large number of uncoordinated users, has attracted considerable attention. In \cite{polyanskiy2017perspective}, the random coding achievability bound was derived and compared against popular multiple-access schemes. A number of algorithmic solutions for URA were proposed in \cite{amalladinne2018coupled,calderbank2020chirrup,pradhan2019joint,pradhan2020polar}. However, all of the above theoretical and algorithmic works focused on the case of a single receive antenna at the BS. To date, the only two algorithmic solutions for the URA paradigm that have so far investigated the use of the massive MIMO technology are \cite{shyianov2020massive,fengler2019massive}  whose performances have not been gauged against any achievability bound. We will refer to these two massive MIMO-based URA schemes in \cite{shyianov2020massive,fengler2019massive} as ``clustering-based" and ``covariance-based", respectively. 

In Fig. \ref{fig:Algorithm Comparisons}, we use computer simulations to compare both schemes to the new achievability bound established in \textbf{Theorem \ref{supremum theorem}} as well as to the exact expression established in (\ref{Non-asymptotic Pi}). 
\begin{figure}[h!]
    \centering
    \includegraphics[width=1\linewidth]{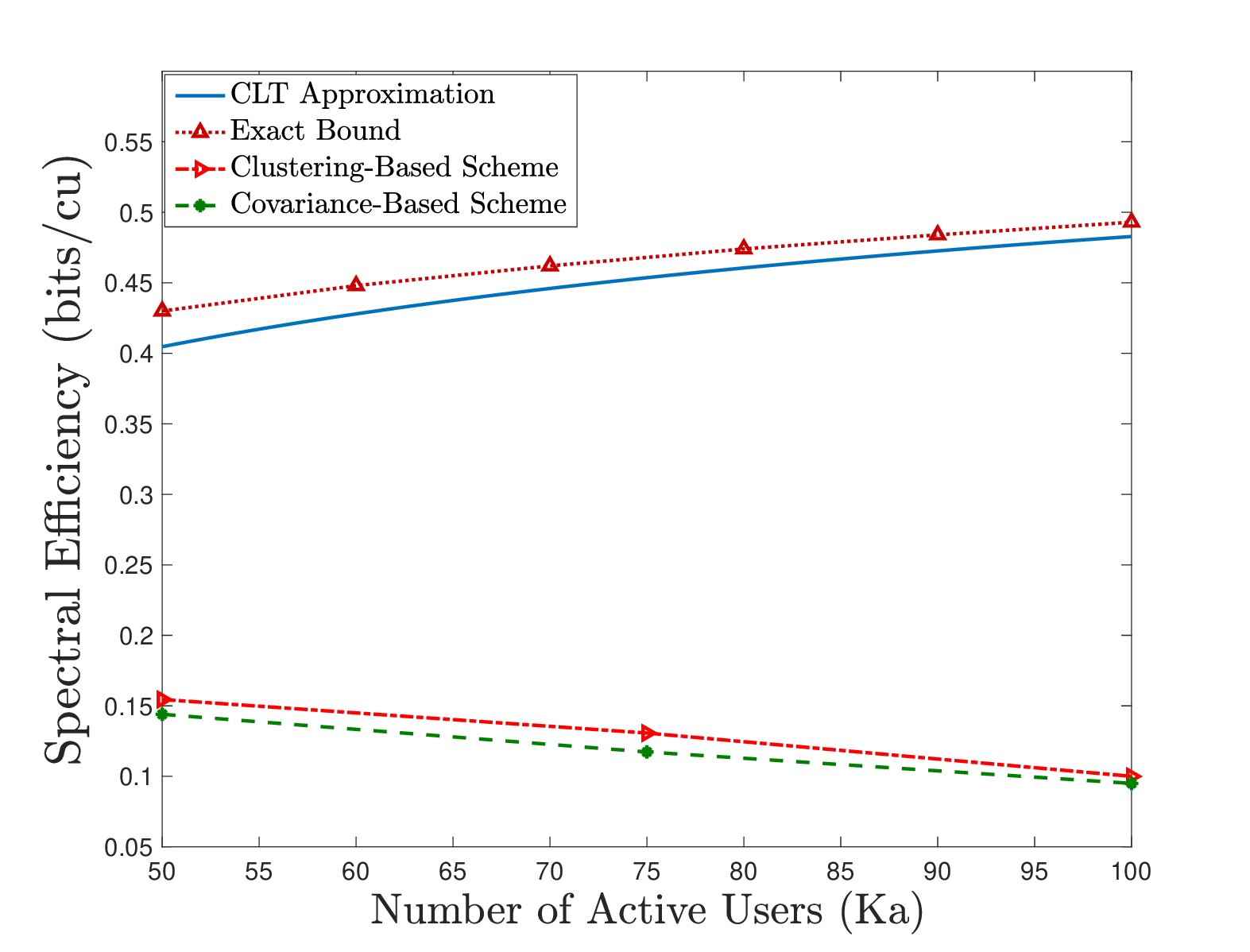} 
    \caption{Performance of two recent URA schemes against the newly established achievability bound in (\ref{Supremum}) with $p_e = 10^{-2}$.}  
    \label{fig:Algorithm Comparisons}
\end{figure}
For both the schemes, we fix the bandwidth to $W = 10$ MHz and the noise variance to $\sigma_{w}^2 = 10^{-19.9}\times W$  [Watts] and then calculate the required transmit power that yields the SNR, $\beta^{-1} = 30$ dB. For the clustering-based URA scheme, we use a Gaussian prior for HyGAMP-based compressed sensing (CS) and communicate $B = 102$ information bits per user/packet over $L = 6$ slots. For the covariance-based scheme, we fix the number of information bits per user/packet to $B = 104$ bits which are communicated over $L = 17$ slots. The parity bit allocation for the outer tree code is set to $p = [0,8,8,\ldots,14]$. We also use $J = 14$ coded bits per slot which leads to the total rate of the outer code $R_\textrm{out} = 0.437$. For both the schemes, we simulate $3$ data points with $K_{a} = [50,75,100]$ active users and $M = [30,45,60]$ antennas at the base station in which case the achievable spectral efficiency in (\ref{Age_Capacity_with_tau}) is given by $\log_2\left(1+\frac{M}{K_a}\right)$. Note that even though the achievable spectral efficiency does not depend on the total number of users, as is the case in \cite{polyanskiy2017perspective}, the AoI does. In fact, as the total number of users increases, the AoI grows unbounded for any fixed number of active users. 

In the plots of Fig. \ref{fig:Algorithm Comparisons}, apart from the gap between the newly established bound and the existing algorithmic solutions, it is seen that the achievable spectral efficiency of both schemes decreases as the number of active users increases. Actually, it should be possible to rigorously prove this limitation for any CS-based decoding scheme. Roughly speaking, as the number of active users grows, increasing the per-user spectral efficiency requires one to decrease the blocklength thereby rendering the CS-based support recovery task more challenging. In fact, the fundamental limitation of support recovery requires (see Chapter 7 of \cite{wainwright2019high}) the blocklength $L = \mathcal{O}(K_a\log(\frac{2^{B/L}}{K_a}))$ to scale a little faster than the number of active users with a single antenna at the BS. On the contrary, in \cite{fengler2019massive,fengler2021non}, it was shown that the covariance-based URA scheme with a large-scale antenna array at the BS can recover the support perfectly as long as $K_a\log^2\big(\frac{2^{B/L}}{K_a}\big) = \mathcal{O}(L^2)$ and $\frac{K_a}{M} = o(1)$. However, in this case, the achievable spectral efficiency goes to infinity and the achievable performance with respect to \textbf{Theorem \ref{supremum theorem}} has to be investigated carefully. In particular, a sharper characterization of the $\frac{K_a}{M} = o(1)$ term is required. 
\section{Conclusion}\label{Section Conclusion}
We have established achievability and converse results in the $K_{a}$-user GMAC with a large-scale antenna array at the BS. We have defined the age-limited capacity as the maximum spectral efficiency achievable such that the AoI is finite in asymptotic system limits. In this case, we have shown that in order to minimize the system AoI all devices must be active in every transmission period.
This is also the case in finite system sizes in which the AoI is, however, limited by the probability of error. We have used our bound to compare the two recent massive MIMO URA algorithms, thereby revealing a huge gap between their performance and the overall spectral efficiency that can be potentially achieved in practice. In future work, considering the overloaded system \cite{liu2018massive}, $\zeta/\tau < 1$, it is desirable to do some scheduling in order to control the inter-user interference. One approach could be to subdivide each slot into $J$ scheduling intervals. Then the asymptotic joint optimization of the spectral efficiency as well as the AoI could be performed over the number of scheduling intervals.
\appendix

\subsection{Derivation of the AoI}\label{appendix:Age}
Recall from (\ref{Age of Information}) the definition of AoI. The latter is a limit of a time-average of the age sample function as the time horizon grows large. Instead of computing the integral in (\ref{Age of Information}) directly, we can express it as a function of the inter-update times, as shown pictorially in Fig. \ref{fig:Decomposed AoI}.
\begin{figure}[H]
    \centering
    \includegraphics[width=0.9\linewidth]{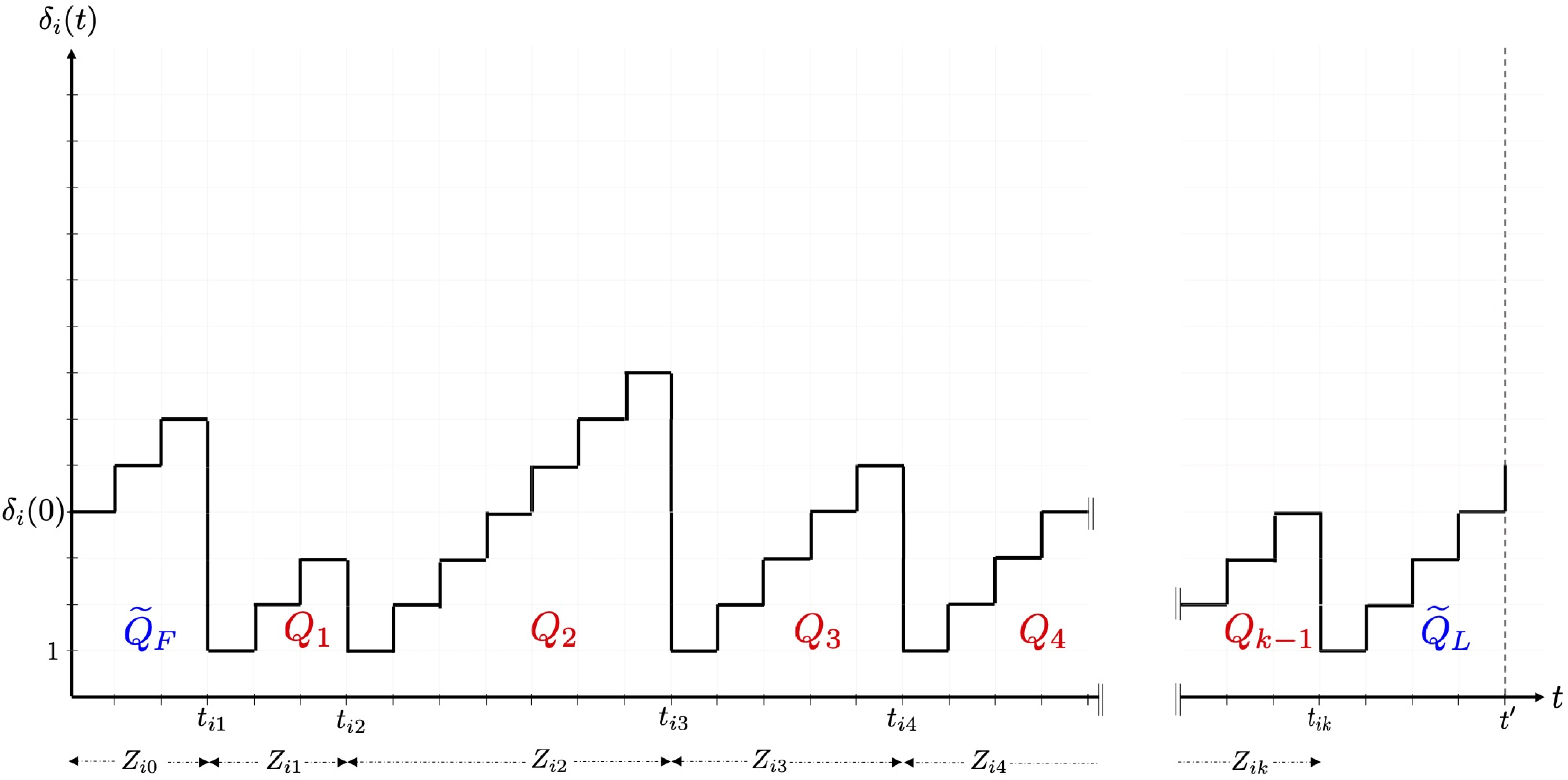}
    \vskip 0.1cm
    \caption{Decomposition of a sample function of the AoI.}  
    \label{fig:Decomposed AoI}
\end{figure}
The area below the sample function shown in Fig. \ref{fig:Decomposed AoI} consists of a rectangular base of width $t'$ and unit height, jagged triangular structures, and boundary pieces (both above the base rectangle). We denote the area of the triangular pieces by $Q_{l}$ and the first and last boundary pieces by $\widetilde{Q}_{F}$ and $\widetilde{Q}_{L}$, respectively. Now (\ref{Age of Information}) can be re-written as:
\begin{equation}\label{without_time_average}
    \int_{0}^{t'}\delta_i(t)dt ~=~ t'+\widetilde{Q}_{F}+\widetilde{Q}_{L}+\sum_{k=1}^{N(t')-1}Q_{k},
\end{equation}
where $N(t')$ denotes the number of arrivals by time $t'$. The $Q_l$'s can be written in terms of the inter-update intervals as:
\begin{equation}
    Q_{l} = \frac{Z_{il}(Z_{il}-1)}{2}.
\end{equation}
Dividing the right-hand side of (\ref{without_time_average}) by $t'$ and taking the limit (as $t'$ goes to $+\infty$) we have:
\begin{equation}\label{almost done}
    \Delta_{i} = \lim_{t' \to \infty}\left(1+\frac{\widetilde{Q}_{F}+\widetilde{Q}_{L}}{t'}+\sum_{k=1}^{N(t')-1}\frac{Z_{ik}(Z_{ik}-1)}{2t'}\right).
\end{equation}
The first term can be taken out of the limit and the second term goes to 0 WP1. We re-write the last term in (\ref{almost done}) as follows:
\begin{equation}
    \lim_{t'\to\infty}\left(\frac{N(t')}{2t'}-\frac{1}{2t'}\right)\sum_{k=1}^{N(t')-1}\frac{Z_{ik}(Z_{ik}-1)}{N(t')-1}.
\end{equation}
As the Bernoulli process is a renewal process, we have $\frac{N(t')}{2t'}\to\frac{1}{2\mathbb{E}[Z_{i}]}$ WP1 \cite{gallager2013stochastic}. Also, from the strong law of large numbers (SLLN) the sum in the limit converges to $\mathbb{E}[Z_{i}^2]-\mathbb{E}[Z_{i}]$ WP1. Therefore, we have:
\begin{equation}
    \Delta_{i} ~=~ \frac{\mathbb{E}[Z_{i}^2]}{2\mathbb{E}[Z_{i}]}+\frac{1}{2}.
\end{equation}



\subsection{Proofs of the Various Approximations}\label{Appendix:CLT}
\begin{lemma}\label{Lemma_1}
The inverse CDF of the sum of random variables, $\sum_{m=1}^{2M}S_{m}+\sum_{l=1}^{2k}V_{l}$, converges uniformly to the standard normal inverse CDF, where the convergence rate is bounded above by $\mathcal{O}\left(\frac{1}{\sqrt{M+k}}\right)$,  
\begin{equation}
    \Pr\left\{\sum_{m=1}^{2M}S_{m}+\sum_{l=1}^{2k}V_{l} \geq \beta\right\} ~=~ Q(w(k)) + \mathcal{O}\left(\frac{1}{\sqrt{M+k}}\right),
\end{equation}
in which $w(k) = \frac{\beta - \alpha_{\rho}M+k}{\sqrt{\alpha_{\rho}^{2}M+k}}$.
\end{lemma}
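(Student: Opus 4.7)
The plan is a direct appeal to the Berry-Esseen central limit theorem for independent, non-identically distributed summands. First I would compute the first two moments of each block: for $S_m\sim\Gamma(1/2,\alpha_\rho)$ we have $\mathbb{E}[S_m]=\alpha_\rho/2$ and $\mathrm{Var}(S_m)=\alpha_\rho^2/2$, while for $V_l\sim\Gamma(1/2,1)$ we have $\mathbb{E}[V_l]=1/2$ and $\mathrm{Var}(V_l)=1/2$. Setting $T\triangleq\sum_{m=1}^{2M}S_m+\sum_{l=1}^{2k}V_l$ then gives $\mathbb{E}[T]=\alpha_\rho M+k$ and $\mathrm{Var}(T)=\alpha_\rho^2 M+k$, which identifies the denominator of $w(k)$ with the standard deviation $\sigma$ of $T$ and matches its numerator with the standardized threshold $(\beta-\mathbb{E}[T])/\sigma$ (up to the sign convention used in writing the tail as a Q-function).

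Next I would invoke the Berry-Esseen inequality in its non-i.i.d.\ form: the uniform distance between the CDF of $(T-\mathbb{E}[T])/\sigma$ and the standard normal CDF is bounded by $C\sum_i\rho_i/\sigma^3$, where $\rho_i$ denotes the third absolute central moment of the $i$-th summand. Both $S_m$ and $V_l$ have finite third absolute moments (of order $\alpha_\rho^3$ and $1$ respectively, since they are Gamma-distributed with fixed shape $1/2$), so $\sum_i\rho_i=\mathcal{O}(M\alpha_\rho^3+k)$. In the regime of interest where $\alpha_\rho$ is bounded, this reduces to $\mathcal{O}(M+k)$, while $\sigma^3=(\alpha_\rho^2M+k)^{3/2}$ is of order $(M+k)^{3/2}$. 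The Berry-Esseen ratio therefore collapses to $\mathcal{O}\bigl((M+k)^{-1/2}\bigr)$, which is precisely the error term advertised in the statement. Evaluating the normal tail at the standardized threshold then yields $Q(w(k))$, completing the argument.

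The step requiring the most care is the uniformity of the constant hidden in $\mathcal{O}(1/\sqrt{M+k})$, both in $k$ and in $\alpha_\rho$. Uniformity in $k$ is essential because this estimate will later be averaged against the Binomial distribution over $k$ in \textbf{Lemma \ref{appendix_A:Q_function}}; if the Berry-Esseen constant blew up for extreme $k$, that averaging could destroy the rate. To secure uniformity, I would check that the per-summand Berry-Esseen ratios $\rho_S/\sigma_S^3$ and $\rho_V/\sigma_V^3$ are pure numerical constants (independent of $\alpha_\rho$), which they are because the $\alpha_\rho$ factors cancel between the third absolute moment and the cube of the standard deviation. This reduces the whole bound to a function of $M+k$ alone, giving exactly the claimed rate and leaving no residual dependence that could interfere with the subsequent CLT argument applied to the Binomial activity count.
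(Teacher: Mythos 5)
Your proposal is correct and follows essentially the same route as the paper's proof: compute the means and variances of the $\Gamma(1/2,\alpha_\rho)$ and $\Gamma(1/2,1)$ summands, apply the non-i.i.d.\ Berry--Esseen inequality, and bound the ratio of the aggregate third absolute moments to $\sigma^3$ by $\mathcal{O}\bigl((M+k)^{-1/2}\bigr)$ exactly as in \textbf{Appendix \ref{Appendix:CLT}} (where the $V_l$ terms actually enter the sum with a negative sign, which resolves the apparent mismatch between your $\mathbb{E}[T]=\alpha_\rho M+k$ and the $+k$ in the numerator of $w(k)$). The only overstatement is the claim that the hidden constant is independent of $\alpha_\rho$ because the per-summand ratios cancel --- the aggregate ratio $\bigl(c_1M\alpha_\rho^3+c_2k\bigr)/\bigl(c_3M\alpha_\rho^2+c_4k\bigr)^{3/2}$ does retain $\alpha_\rho$-dependence when both blocks are present --- but this is immaterial here since $\rho$ is held fixed throughout the lemma.
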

\begin{proof}
Let $B_{2M+2k}$ be the following normalized RV:
\begin{eqnarray}
    B_{2M+2k} ~~\triangleq &\nonumber\\\nonumber&
    \!\!\!\!\!\!\!\!\!\!\!\!\!\!\!\!\!\!\!\!\!\frac{1}{\sqrt{\alpha_{\rho}^{2}M~+~k}}\left[\sum_{m=1}^{2M}S_{m}+\sum_{l=1}^{2k}V_{l}~-~(M\alpha_{\rho}-k) \right],
\end{eqnarray}
whose inverse CDF is denoted as $F_{2M+2k}(x)$. From the Berry-Essen inequality for non identically distributed random variables, the difference between $F_{2M+2k}(x)$ and the standard normal inverse CDF is bounded uniformly, i.e.
\begin{equation}\label{LemmaA_1:sup}
    \sup_{x\in \Re}|F_{2M+2k}(x) - Q(x)|\leq K\phi_1(2M + 2k)\phi_2(2M + 2k),
\end{equation}
where $K$ is a constant, and 
\begin{eqnarray}
     \phi_1(2M + 2k) = &\\& \nonumber \!\!\!\!\!\!\!\!\!\!\!\!\!\!\!\!\!\!\!\!\!\!\!\!\!\!\!\!\!\!\!\!\!\!\!\!\!\!\!\!\!\!\!\!\!\! \bigg( \sum_{m=1}^{2M}\mathbb{E}[|S_m-\mathbb{E}[S_m]|^2] + \sum_{m=1}^{2k}\mathbb{E}[|V_l-\mathbb{E}[V_l]|^2]\bigg)^{-\frac{3}{2}},
\end{eqnarray}
\begin{eqnarray}
   \!\!\!\!\! \phi_2(2M + 2k) = &\\& \nonumber \!\!\!\!\!\!\!\!\!\!\!\!\!\!\!\!\!\!\!\!\!\!\!\!\!\!\!\!\!\!\!\!\!\!\!\!\!\!\!\!\!\!\!\!\!\! \bigg(\sum_{m=1}^{2M}\mathbb{E}[|S_m-\mathbb{E}[S_m]|^3]+\sum_{l=1}^{2k}\mathbb{E}[|V_l-\mathbb{E}[V_l]|^3]\bigg).
\end{eqnarray}
Since $S_m\sim \Gamma(\frac{1}{2},\alpha_{\rho})$ and $V_l\sim \Gamma(\frac{1}{2},1)$, it can be verified by computing the fourth central moment that the second- and third-order central moments are finite. Let $\mathbb{E}[|S_m-\mathbb{E}[S_m]|^3] = \frac{C_1}{2}>0$, $\mathbb{E}[|V_l-\mathbb{E}[V_l]|^3] = \frac{C_2}{2}>0$, $\mathbb{E}[|S_m-\mathbb{E}[S_m]|^2] = \frac{C_3}{2}>0$, and $\mathbb{E}[|V_l-\mathbb{E}[V_l]|^2] = \frac{C_4}{2}>0$. Using these notations, we rewrite (\ref{LemmaA_1:sup}) as follows:
\begin{equation}
    |F_{2M+2k}(x) - Q(x)|\leq \frac{C_{1}M + C_{2}k}{(C_{3}M+C_{4}k)^{\frac{3}{2}}}~~ \forall x.
\end{equation}
Since $C_{1}M + C_{2}k < \max\{C_1,C_2\}(M + k)$ and $C_{3}M + C_{4}k > \min\{C_3,C_4\}(M + k)$ we obtain:
\begin{eqnarray}
    |F_{2M+2k}(x) - Q(x)|\leq &\nonumber\\& \!\!\!\!\!\!\!\!\!\!\!\!\!\!\! \frac{\max\{C_1,C_2\}(M+k)}{\min\{C_3,C_4\}(M+k)^{\frac{3}{2}}} = \frac{C}{\sqrt{M+k}}~~ \forall x,
\end{eqnarray}
where $C = \frac{\max\{C_1,C_2\}}{\min\{C_3,C_4\}}$. This implies: 
\begin{equation*}
    F_{2M+2k}(x) - Q(x) ~=~ \mathcal{O}\left(\frac{1}{\sqrt{M+k}}\right) ~~~\forall x.
\end{equation*}
This can be used to show that:
\begin{eqnarray}
 \Pr\left\{\sum_{m=1}^{2M}S_{m}+\sum_{l=1}^{2k}V_{l} \geq \beta\right\} =
 &\nonumber\\& \!\!\!\!\!\!\!\!\!\!\!\!\!\!\!\!\!\!\!\!\!\!\!\!\!\!\!\!\!\!\!\!\!\!\!\!\!\!\!\!\!\!\!\!\!\!\!\!\!\!\!\!\!\!\!\!\!\!\!\!\!\!\!\!\! F_{2M+2k}(w(k)) ~=~ Q(w(k)) + \mathcal{O}\left(\frac{1}{\sqrt{M+k}}\right). 
\end{eqnarray}
\end{proof}

\begin{lemma}\label{appendix_A:Q_function}
The probability of error in (\ref{Explicit Total Probability}) is given by,
\begin{equation}
p_{e} ~=~ 1 ~-~ \frac{1}{\sqrt{2\pi}}\int_{-\infty}^{\infty}e^{-\frac{s^2}{2}}Q(w(s))ds ~+~ \mathcal{O}\left(\frac{1}{\sqrt{N}}\right).
\end{equation}
\end{lemma}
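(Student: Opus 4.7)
\textbf{Proof plan for Lemma \ref{appendix_A:Q_function}.} The plan is to start from the exact expression (\ref{Explicit Total Probability}), apply Lemma \ref{Lemma_1} to the inner probability to replace it with a Q-function (with a controlled residual), and then apply a second CLT to the outer binomial sum over the number of active interferers in order to convert it into a Gaussian integral in $s$.

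First, I would insert the estimate of Lemma \ref{Lemma_1} into (\ref{Explicit Total Probability}), giving
\begin{equation*}
    p_{e} = 1 - \sum_{k=0}^{N-1}\binom{N-1}{k}\tau^{k}(1-\tau)^{N-1-k}\,Q(w(k)) - R_{N},
\end{equation*}
where the residual $R_{N}$ is the weighted sum of the $\mathcal{O}(1/\sqrt{M+k})$ errors. Since $M=\zeta N$ and the binomial weights sum to one, we have $|R_{N}|\le C\,\mathbb{E}[1/\sqrt{M+K}] \le C/\sqrt{\zeta N} = \mathcal{O}(1/\sqrt{N})$, so the residual is absorbed into the final error term. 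This reduces the lemma to evaluating $\mathbb{E}[Q(w(K))]$ asymptotically, where $K\sim\mathrm{Bin}(N-1,\tau)$.

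Next, I would normalize the binomial: set $S_{N}\triangleq (K-(N-1)\tau)/\sqrt{(N-1)\tau(1-\tau)}$ so that $K = (N-1)\tau + S_{N}\sqrt{(N-1)\tau(1-\tau)}$, which is exactly the substitution that turns $w(k)$ into the $w(s)$ appearing in (\ref{before N-1 Approximation}). By the Berry--Esseen theorem applied to the sum of the Bernoulli random variables $\varepsilon_{1},\dots,\varepsilon_{N-1}$, the CDF of $S_{N}$ satisfies
\begin{equation*}
    \sup_{s\in\mathbb{R}}\bigl|F_{S_{N}}(s)-\Phi(s)\bigr| \;\le\; \frac{C_{\tau}}{\sqrt{N-1}} = \mathcal{O}\!\left(\frac{1}{\sqrt{N}}\right),
\end{equation*}
where $\Phi$ is the standard normal CDF. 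Writing $\mathbb{E}[Q(w(K))]=\int Q(w(s))\,dF_{S_{N}}(s)$ and using integration by parts (noting that $s\mapsto Q(w(s))$ is of bounded total variation, in fact bounded in $[0,1]$ and monotone on each side of its critical point, so $\mathrm{TV}\le 2$), I would obtain
\begin{equation*}
    \int Q(w(s))\,dF_{S_{N}}(s) \;=\; \int Q(w(s))\,d\Phi(s) + \mathcal{O}\!\left(\frac{1}{\sqrt{N}}\right),
\end{equation*}
which is the desired expression since $d\Phi(s)=\tfrac{1}{\sqrt{2\pi}}e^{-s^{2}/2}\,ds$.

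The main obstacle is converting the uniform CDF bound furnished by Berry--Esseen into a bound on the expectation of the bounded, continuous functional $Q(w(\cdot))$; this is precisely where the bounded-variation argument above is needed, since a pointwise CDF bound does not by itself control an integral of a test function. Once that integration-by-parts step is in place, the two $\mathcal{O}(1/\sqrt{N})$ contributions (one from Lemma \ref{Lemma_1} averaged over $k$, and one from the binomial CLT) combine into a single $\mathcal{O}(1/\sqrt{N})$ error, yielding the stated identity.
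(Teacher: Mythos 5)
Your proposal is correct and follows essentially the same route as the paper's own proof: insert the Lemma~\ref{Lemma_1} estimate, absorb its error via $M=\zeta N$, and then convert the binomial average of $Q(w(k))$ into a Gaussian integral by applying Berry--Esseen to the Bernoulli sum together with a Stieltjes integration-by-parts argument that uses the bounded variation of $Q(w(\cdot))$ to pass from a uniform CDF bound to a bound on the expectation. The only cosmetic difference is that the paper carries out the integration by parts on compact intervals $[-x,x]$ (tracking the boundary terms explicitly) before letting $x\to\infty$, whereas you work directly on the whole line; the substance is identical.
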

\begin{proof}
We begin by noting that the expression in (\ref{Explicit Total Probability}) is the expected value of (\ref{large_sum_control}) with respect to a Binomial distribution with parameters $N-1$ and $\tau$. Here, we show that this expected value can be instead taken with respect to a Gaussian distribution with $\mathcal{O}\left(\frac{1}{\sqrt{N}}\right)$ error term. To start with, since the Binomial distribution does not admit a density function it is more convenient to use Stieltjes integrals over a compact interval before going to infinity. First, we show that $\mathcal{O}\left(\frac{1}{\sqrt{M+k}}\right) \,=\, \mathcal{O}\left(\frac{1}{\sqrt{N}}\right)$ by noticing that:
\begin{equation}
    \left|\mathcal{O}\left(\frac{1}{\sqrt{M+k}}\right)\right| < \frac{C}{\sqrt{M+k}} \,=\, \frac{C}{\sqrt{N(\zeta + \delta)}} \,=\, \frac{\widetilde{C}(\delta)}{\sqrt{N}},  
\end{equation}
where $0<\delta = \frac{k}{N}<1$.
We also write the error resulting from integrating with respect to different CDFs as follows:
\begin{eqnarray}\label{Lemma_2:difference}
    \bigg|\int_{-x}^{x}\left(Q(w(k))+\mathcal{O}\left(\frac{1}{\sqrt{N}}\right)\right)dF(k)~~- &\nonumber\\& \!\!\!\!\!\!\!\!\!\!\!\!\!\!\!\!\!\!\!\!\!\!\!\!\!\!\!\!\!\!\!\!\!\!\!\!\!\!\!\!\!\!\!\!\!\!\!\!\!\!\!\!\!\!\!\!\!\!\!\!\!\!\!\!\!\!\!\!\!\!\!  \mathlarger{\int_{-x}^{x}}\left(Q(w(k))+\mathcal{O}\left(\frac{1}{\sqrt{N}}\right)\right)d\widetilde{F}(k) \bigg|,
\end{eqnarray}
where $F(k)$ is the CDF of a Binomial RV with mean $(N-1)\tau$ and variance $(N-1)\tau(1-\tau)$ and $\widetilde{F}(k)$ is the CDF of the approximating Gaussian distribution with the same mean and variance. Since both $F(k)$ and $\widetilde{F}(k)$ are non-decreasing on any compact interval $[-x,x]$, the difference $F(k)-\widetilde{F}(k)$ is of bounded variation, and hence the Stieltjes integral with respect to $F(k)-\widetilde{F}(k)$ is defined for any continuous function. Since $Q(w(k))$ is continuous at $w(k)$, we can re-write (\ref{Lemma_2:difference}) as follows:
\begin{equation}\label{Lemma_2:Stiltjes_crap}
    \left|\int_{-x}^{x}\left(Q(w(k))+\mathcal{O}\left(\frac{1}{\sqrt{N}}\right)\right)d\left(F(k)-\widetilde{F}(k)\right)\right|.
\end{equation}
Furthermore, since $Q(w(k))$ is of bounded variation on $[-x,x]$, the integral in (\ref{Lemma_2:Stiltjes_crap}) can be integrated by parts to yield:
\begin{eqnarray}\label{Lemma_2:More_Stiltjes_crap}
\left|\int_{-x}^{x}\left(Q(w(k))+\mathcal{O}\left(\frac{1}{\sqrt{N}}\right)\right)d\left(F(k)-\widetilde{F}(k)\right)\right| ~=
&\nonumber\\&\!\!\!\!\!\!\!\!\!\!\!\!\!\!\!\!\!\!\!\!\!\!\!\!\!\!\!\!\!\!\!\!\!\!\!\!\!\!\!\!\!\!\!\!\!\!\!\!\!\!\!\!\!\!\!\!\!\!\!\!\!\!\!\!\!\!\!\!\!\!\!\!\!\!\!\!\!\!\!\!\!\!\!\!\!\!\!\!\!\!\!\!\!\!\!\!\!\!\!\!\!\!\!\!\!\!\!\!\!\!\!\!\!\!\!\!\!\!\!\!\!\!\!\!\!\!\! |M_1(x) - M_2(x) - M_3(x)|,
\end{eqnarray}
with
\begin{align}
M_1(x) &= \left(Q(x)+\mathcal{O}\left(\frac{1}{\sqrt{N}}\right)\right)\left(F(x)-\widetilde{F}(-x)\right) \\\ 
    M_2(x) &=\left(Q(-x)+\mathcal{O}\left(\frac{1}{\sqrt{N}}\right)\right)\left(F(-x)-\widetilde{F}(-x)\right)  \\\ M_3(x)&=\int_{-x}^{x}\left(F(k)-\widetilde{F}(k)\right)dQ(w(k)).
\end{align}
Now, by the triangle inequality, it follows from (\ref{Lemma_2:More_Stiltjes_crap}) that: 
\begin{eqnarray}\label{lemma2_further_bound}
\left|\int_{-x}^{x}\left(Q(w(k))+\mathcal{O}\left(\frac{1}{\sqrt{N}}\right)\right)d\left(F(k)-\widetilde{F}(k)\right)\right| ~~\leq
&\nonumber\\&\!\!\!\!\!\!\!\!\!\!\!\!\!\!\!\!\!\!\!\!\!\!\!\!\!\!\!\!\!\!\!\!\!\!\!\!\!\!\!\!\!\!\!\!\!\!\!\!\!\!\!\!\!\!\!\!\!\!\!\!\!\!\!\!\!\!\!\!\!\!\!\!\!\!\!\!\!\!\!\!\!\!\!\!\!\!\!\!\!\!\!\!\!\!\!\!\!\!\!\!\!\!\!\!\!\!\!\!\!\!\!\!\!\!\!\!\!\!\!\!\!\!\!\!\!\!\! |M_1(x)| + |M_2(x)| + |M_3(x)|.
\end{eqnarray}
Then, since $0<Q(x)<1$, (\ref{lemma2_further_bound}) can be further bounded by:
\begin{eqnarray}\label{Lemma2:triangle}
    \!\!\!\!\!\!\!\!\!\!\!\!\!\!\!\!\!\!\!\!\!\!K_1\left|F(x)-\widetilde{F}(x)\right|+ K_2\left|\widetilde{F}(-x) - F(-x)\right| ~+ &\nonumber\\&\!\!\!\!\!\!\!\!\!\!\!\!\!\!\!\!\!\!\!\!\!\!\!\!\!\!\!\!\!\!\!\!\!\!\!\!\!\!\!\!\!\!\!\!\!\!\!\!\!\!\!\!\!\!\!\!\!\!\!\!\!\!\!\!\!\!\!\!\!\!\!\!\!\!\!\!\!\!\!\!\!\!\!\!\!\!\!\!\!\!\!\!\!\!\!\sup_{k\in[-x,x]}\left|F(k)-\widetilde{F}(k)\right|V_{-x}^{x}(Q),
\end{eqnarray}
where $V_{-x}^{x}(Q) < \infty$ is the total variation of the Q-function over $[-x,x]$.
Recall, also that $F(x)$ represents the CDF of a large sum of $N-1$ binary random variables. Consequently, upon appropriate normalization and by applying the Berry-Essen inequality for i.i.d random variables we see that (\ref{Lemma2:triangle}) is less than:
\begin{equation}
    \frac{A_1}{\sqrt{N}}+\frac{A_2}{\sqrt{N}} + \frac{A_3}{\sqrt{N}}  ~=~ \mathcal{O}\left(\frac{1}{\sqrt{N}}\right),
\end{equation}
for some positive constants $A_1$, $A_2$ and $A_3$. Now, coming back to (\ref{Lemma_2:difference}) we see that:
\begin{eqnarray}\label{Lemma2:approximation_Pre_final}
    \int_{-x}^{x}\left(Q(w(k))+\mathcal{O}\left(\frac{1}{\sqrt{N}}\right)\right)dF(k) ~=~ &\nonumber\\&\!\!\!\!\!\!\!\!\!\!\!\!\!\!\!\!\!\!\!\!\!\!\!\!\!\!\!\!\!\!\!\!\!\!\!\!\!\!\!\!\!\!\!\!\!\!\!\!\!\!\!\!\!\!\!\!\!\!\!\!\!\!\!\!\!\!\!\!\!\!\!\!\!\!\!\!\!\!\!\!\!\!\!\!\!\!\!\!\!\!\!\!\!\! \mathlarger{\int_{-x}^{x}}\left(Q(w(k))+\mathcal{O}\left(\frac{1}{\sqrt{N}}\right)\right)d\widetilde{F}(k) + \mathcal{O}\left(\frac{1}{\sqrt{N}}\right),
\end{eqnarray}
where the $\mathcal{O}(\frac{1}{\sqrt{N}})$ on the left- and right-hand sides can be safely dropped since both CDFs are bounded, thereby yielding:
\begin{equation}\label{Lemma2:approximation_final}
    \int_{-x}^{x}Q(w(k))dF(k) ~=~ \int_{-x}^{x}Q(w(k))d\widetilde{F}(k) + \mathcal{O}\left(\frac{1}{\sqrt{N}}\right).
\end{equation}
The right-hand side integral in (\ref{Lemma2:approximation_final}) is with respect to a Gaussian CDF and can now be explicitly written in terms of a density by making the substitution $s = \frac{k - (N-1)\tau}{\sqrt{(N-1)\tau(1-\tau)}}$
\begin{equation}
    \int_{-x}^{x}Q(w(k))d\widetilde{F}(k) ~=~ \frac{1}{\sqrt{2\pi}}\int_{-x'}^{x'}Q(w(s))e^{-\frac{s^2}{2}}ds.
\end{equation}
Since the approximation is uniform in $x$ and the integrals are convergent, we can go to the limit in (\ref{Lemma2:approximation_final}) and obtain:
\begin{eqnarray}
    \int_{-\infty}^{\infty}Q(w(k))dF(k) ~~= &\nonumber\\&\!\!\!\!\!\!\!\!\!\!\!\!\!\!\!\!\!\!\!\!\!\!\!\!\!\!\!\!\!\!\!\!\!\!\!
    \frac{1}{\sqrt{2\pi}}\mathlarger{\int_{-\infty}^{\infty}}Q(w(s))e^{-\frac{s^2}{2}}ds + \mathcal{O}\left(\frac{1}{\sqrt{N}}\right).
\end{eqnarray}
By further noticing that the left-hand side of (\ref{Lemma2:approximation_Pre_final}) is nothing but the expression of $1-p_e$, we finally obtain: 
\begin{equation}
    p_e ~=~ 1 - \frac{1}{\sqrt{2\pi}}\int_{-\infty}^{\infty}Q(w(s))e^{-\frac{s^2}{2}}ds + \mathcal{O}\left(\frac{1}{\sqrt{N}}\right).
\end{equation}
\end{proof}

\begin{lemma}\label{lemma_3}
$w(s)$ in (\ref{before N-1 Approximation}) can be approximated as follows,
\begin{equation}\label{N-1 approximation}
    w(s) ~=~ \frac{\beta-\alpha_{\rho}M+s\sqrt{N\tau(1-\tau)}+N\tau}{\sqrt{\alpha_{\rho}^{2}M+s\sqrt{N\tau(1-\tau)}+N\tau}}~ +~ \mathcal{O}\left(\frac{1}{\sqrt{N}}\right).
\end{equation}
\end{lemma}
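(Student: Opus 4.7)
The plan is to treat this as a routine Taylor-expansion/book-keeping argument that exploits the fact that both the numerator and the quantity under the square root in $w(s)$ are of order $N$, while the discrepancy introduced by the replacement $N-1\mapsto N$ is only of order $1$ (or even $1/\sqrt{N}$) in each. Concretely, I would begin by writing $(N-1)\tau = N\tau - \tau$ and
\[
\sqrt{(N-1)\tau(1-\tau)} \;=\; \sqrt{N\tau(1-\tau)}\sqrt{1-\tfrac{1}{N}} \;=\; \sqrt{N\tau(1-\tau)} \;+\; \mathcal{O}\!\left(\tfrac{1}{\sqrt{N}}\right),
\]
using the standard expansion $\sqrt{1-x}=1-x/2+\mathcal{O}(x^{2})$ for $x=1/N$. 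Substituting into the numerator of (\ref{before N-1 Approximation}) then shows that it equals the numerator of (\ref{N-1 approximation}) plus an additive error of order $\mathcal{O}(1)$, while a completely analogous argument shows that the expression under the radical in the denominator of (\ref{before N-1 Approximation}) equals the corresponding expression in (\ref{N-1 approximation}) up to an additive error of order $\mathcal{O}(1)$.

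Next I would propagate these perturbations through the quotient. Writing
\[
A_{N} \;\triangleq\; \beta-\alpha_{\rho}M+s\sqrt{N\tau(1-\tau)}+N\tau, \qquad B_{N} \;\triangleq\; \alpha_{\rho}^{2}M+s\sqrt{N\tau(1-\tau)}+N\tau,
\]
one has $A_{N}=\Theta(N)$ and $B_{N}=\Theta(N)$ under the scaling $M=\zeta N$, so $\sqrt{B_{N}}=\Theta(\sqrt{N})$. The perturbed quantities are $A_{N}+\mathcal{O}(1)$ and $B_{N}+\mathcal{O}(1)=B_{N}\bigl(1+\mathcal{O}(1/N)\bigr)$, whence
\[
\frac{A_{N}+\mathcal{O}(1)}{\sqrt{B_{N}}\sqrt{1+\mathcal{O}(1/N)}} \;=\; \frac{A_{N}}{\sqrt{B_{N}}}\Bigl(1+\mathcal{O}(1/N)\Bigr)\;+\;\frac{\mathcal{O}(1)}{\sqrt{B_{N}}}\;=\;\frac{A_{N}}{\sqrt{B_{N}}}\;+\;\mathcal{O}\!\left(\tfrac{1}{\sqrt{N}}\right),
\]
where I used that $A_{N}/\sqrt{B_{N}}=\Theta(\sqrt{N})$ so that a multiplicative correction of $\mathcal{O}(1/N)$ in the denominator contributes an additive error of $\mathcal{O}(1/\sqrt{N})$, while the $\mathcal{O}(1)$ additive piece in the numerator divided by $\Theta(\sqrt{N})$ also contributes $\mathcal{O}(1/\sqrt{N})$.

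The main subtlety — which I would flag explicitly — is that the perturbation in the $s$-dependent term of the numerator is really $-s\sqrt{\tau(1-\tau)}/(2\sqrt{N})+\mathcal{O}(s/N^{3/2})$, so the constant hidden in $\mathcal{O}(1/\sqrt{N})$ depends linearly on $s$. This is harmless in the present context because Lemma~\ref{lemma_3} is only invoked inside the Gaussian-weighted integral of Lemma~\ref{appendix_A:Q_function}, where the factor $e^{-s^{2}/2}$ kills any polynomial growth in $s$. I would therefore conclude the proof by either (i) restricting attention to $s$ in a compact interval exactly as in the Stieltjes-integral truncation used in the proof of Lemma~\ref{appendix_A:Q_function}, or (ii) absorbing the linear growth in $s$ into the $\mathcal{O}$-notation with the understanding that the relevant norm is the one induced by Gaussian integration against $e^{-s^{2}/2}$. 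Either way, the Taylor book-keeping above gives the claimed identity (\ref{N-1 approximation}).
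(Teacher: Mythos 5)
Your proposal is correct and follows essentially the same route as the paper's own proof: both isolate the $\mathcal{O}(1)$ additive perturbations caused by replacing $N-1$ with $N$ in the numerator and under the radical, and propagate them through the quotient via a first-order Taylor expansion, using $M=\zeta N$ to conclude that each contribution is $\mathcal{O}(1/\sqrt{N})$. Your explicit remark that the hidden constant grows linearly in $s$ and is tamed by the Gaussian weight in Lemma~\ref{appendix_A:Q_function} is a point the paper's proof passes over silently, and is a worthwhile addition.
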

\begin{proof}
We denote the first term in (\ref{N-1 approximation}) by $w_{a}(s)$ and start by re-writing (\ref{before N-1 Approximation}) as:
\begin{multline}\label{lemma3 step 2}
    w(s) ~=~ \frac{\beta-\alpha_{\rho}M+s\sqrt{N\tau(1-\tau)}+N\tau}{\sqrt{\alpha_{\rho}^{2}M+s\sqrt{(N-1)\tau(1-\tau)}+(N-1)\tau}} \\+~ \frac{s\sqrt{\tau(1-\tau)}(\sqrt{N-1}-\sqrt{N})-\tau}{\sqrt{\alpha_{\rho}^{2}M+s\sqrt{(N-1)\tau(1-\tau)}+(N-1)\tau}}.
\end{multline}
Dividing and multiplying the first term in (\ref{lemma3 step 2}) by the denominator of the first term in (\ref{N-1 approximation}) yields:
\begin{multline}\label{lemma3 step 3}
    w(s) = w_{a}(s)\frac{\sqrt{\alpha_{\rho}^{2}M+s\sqrt{N\tau(1-\tau)}+N\tau}}{\sqrt{\alpha_{\rho}^{2}M+s\sqrt{(N-1)\tau(1-\tau)}+(N-1)\tau}} \\+~ \frac{s\sqrt{\tau(1-\tau)}(\sqrt{N-1}-\sqrt{N})-\tau}{\sqrt{\alpha_{\rho}^{2}M+s\sqrt{(N-1)\tau(1-\tau)}+(N-1)\tau}}.
\end{multline}
Recall that $M$ grows proportionally to $N$ with proportionality coefficient $\zeta$ (i.e. $M = \zeta N$). Therefore, after multiplying and dividing the second term in (\ref{lemma3 step 3}) by $\frac{1}{\sqrt{N}}$, one can see it is $\mathcal{O}\left(\frac{1}{\sqrt{N}}\right)$. Using this fact and simplifying the first term, (\ref{lemma3 step 3}) can be written as:
\begin{eqnarray}\label{lemma3 step 4}
     w(s) ~=&\\ \nonumber& \!\!\!\!\!\!\!\!\!\!\!\!\!\!\!\!\!\!\!\!\!\!\!\!w_{a}(s)\bigg(1+\frac{s\sqrt{\tau(1-\tau)}(\sqrt{N}-\sqrt{N-1})+\tau}{\alpha_{\rho}^{2}\zeta N+s\sqrt{(N-1)\tau(1-\tau)}+(N-1)\tau}\bigg)^{\frac{1}{2}} + \mathcal{O}\bigg(\frac{1}{\sqrt{N}}\bigg).
\end{eqnarray}
Taking a first order Taylor expansion of the coefficient of $w_{a}(s)$ in (\ref{lemma3 step 4}) we obtain the desired result.
\end{proof}

\begin{lemma}\label{lemma_4}
By neglecting the terms $\frac{\beta}{\sqrt{N}}$ and $\frac{s}{\sqrt{N}}\sqrt{\tau(1-\tau)}$ in (\ref{before neglecting terms}) we obtain: 
\begin{equation}\label{lemma4 eqn}
    w(s) ~=~ \frac{\sqrt{N}(\tau-\alpha_{\rho}\zeta)+s\sqrt{\tau(1-\tau)}}{\sqrt{\alpha_{\rho}^{2}\zeta+\tau}} + \mathcal{O}\left(\frac{1}{\sqrt{N}}\right).
\end{equation}
\end{lemma}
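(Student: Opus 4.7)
The plan is to start directly from the exact expression (\ref{before neglecting terms}) and show that the two indicated terms $\beta/\sqrt{N}$ (in the numerator) and $\frac{s}{\sqrt{N}}\sqrt{\tau(1-\tau)}$ (inside the square root in the denominator) contribute only $\mathcal{O}(1/\sqrt{N})$ after suitable Taylor expansion. Throughout, I will exploit the fact that $\beta$ is a finite constant, that $\alpha_\rho^2\zeta + \tau>0$ is a strictly positive constant (so the denominator is bounded away from $0$ for all large $N$), and that $s$ is a fixed pointwise variable inherited from the integral in the previous step.

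First, I would dispose of the numerator perturbation. Write the exact $w(s)$ as $A(s)/D(s)$ with $A(s)=\sqrt{N}(\tau-\alpha_\rho\zeta)+s\sqrt{\tau(1-\tau)}+\beta/\sqrt{N}$ and $D(s)=\sqrt{\alpha_\rho^2\zeta+\tau+s\sqrt{\tau(1-\tau)}/\sqrt{N}}$. Since $D(s)\to B_0:=\sqrt{\alpha_\rho^2\zeta+\tau}$ as $N\to\infty$, the contribution of the $\beta/\sqrt{N}$ term to $w(s)$ is $(\beta/\sqrt{N})/D(s) = \mathcal{O}(1/\sqrt{N})$, which is absorbed into the error term. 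Next, I would Taylor-expand the denominator around $B_0$, using $\sqrt{B_0^2+\varepsilon}=B_0+\varepsilon/(2B_0)+\mathcal{O}(\varepsilon^2)$ with $\varepsilon=s\sqrt{\tau(1-\tau)}/\sqrt{N}=\mathcal{O}(1/\sqrt{N})$, and apply the analogous expansion to $1/D(s)$ to get $1/D(s)=1/B_0+\mathcal{O}(1/\sqrt{N})$.

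Combining these, the target identity follows from multiplying the simplified numerator $A_0(s)=\sqrt{N}(\tau-\alpha_\rho\zeta)+s\sqrt{\tau(1-\tau)}$ by $1/B_0+\mathcal{O}(1/\sqrt{N})$ and collecting the remainder. Algebraically, a convenient rewriting is
\begin{equation*}
\frac{A(s)}{D(s)} - \frac{A_0(s)}{B_0} ~=~ \frac{B_0\,[A(s)-A_0(s)] \;+\; [B_0-D(s)]\,A_0(s)}{B_0\,D(s)},
\end{equation*}
which separates the numerator perturbation (first bracket, clearly $\mathcal{O}(1/\sqrt{N})$) from the denominator perturbation (second bracket). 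Using $B_0-D(s)=-\varepsilon/(B_0+D(s))=\mathcal{O}(1/\sqrt{N})$ reduces the problem to bounding the cross product $[B_0-D(s)]A_0(s)$ divided by $B_0 D(s)$.

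The main obstacle is precisely this cross term: since $A_0(s)$ is of order $\sqrt{N}$ while $B_0-D(s)$ is of order $1/\sqrt{N}$, a naive multiplication produces an $\mathcal{O}(1)$ contribution rather than the claimed $\mathcal{O}(1/\sqrt{N})$. To obtain the stated bound I would either (i) expand $D(s)$ to one further order and verify an exact cancellation of the leading $\mathcal{O}(1)$ piece (i.e., show the residual of the form $-s(\tau-\alpha_\rho\zeta)\sqrt{\tau(1-\tau)}/(2B_0^3)$ cancels against a matching term generated elsewhere), or, failing a clean pointwise cancellation, carry the bound in the sense required by its subsequent use inside the Gaussian integral in Lemma \ref{lemma_5} and Lemma \ref{lemma_6}, where the offending $\mathcal{O}(1)$ piece is odd in $s$ and integrates to zero against the standard normal density. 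I would present the pointwise expansion first and then make explicit the integrated interpretation that is actually invoked in the downstream argument.
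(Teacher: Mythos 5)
Your decomposition is essentially the paper's own: the authors likewise dispose of the $\beta/\sqrt{N}$ numerator term first, then multiply and divide by $\sqrt{\alpha_{\rho}^{2}\zeta+\tau}$ so that the exact $w(s)$ becomes $w'_{a}(s)\bigl(1+\tfrac{s\sqrt{\tau(1-\tau)}}{(\alpha_{\rho}^{2}\zeta+\tau)\sqrt{N}}\bigr)^{-1/2}+\mathcal{O}(1/\sqrt{N})$, and finish by a first-order Taylor expansion of the bracket. So there is no methodological divergence; the difference is that you pressed on the step the paper states in one line.

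And you are right to press: the cross term you isolate is genuinely $\mathcal{O}(1)$, not $\mathcal{O}(1/\sqrt{N})$. Carrying the expansion out explicitly, the residual is exactly $-\,\frac{s(\tau-\alpha_{\rho}\zeta)\sqrt{\tau(1-\tau)}}{2(\alpha_{\rho}^{2}\zeta+\tau)^{3/2}}+\mathcal{O}(1/\sqrt{N})$, which is independent of $N$ and nonzero unless $\tau=\alpha_{\rho}\zeta$. So your option (i) fails: there is no pointwise cancellation, and the lemma's stated $\mathcal{O}(1/\sqrt{N})$ error is not attainable for fixed $s$; the paper's "taking a first order Taylor expansion \ldots we obtain the desired result" silently discards an $\mathcal{O}(1)$ quantity. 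Your option (ii) is the correct repair and is in fact how the paper itself ultimately copes with terms of this type: Lemma \ref{lemma_5} introduces a further $\mathcal{O}(1)$ error that is linear (hence odd) in $s$, and Lemma \ref{lemma_6} argues that such odd-in-$s$ contributions vanish when integrated against the standard normal density, so only the $\mathcal{O}(1/\sqrt{N})$ remainder survives in $p_{e}$. Your residual $-\,s(\tau-\alpha_{\rho}\zeta)\sqrt{\tau(1-\tau)}/(2(\alpha_{\rho}^{2}\zeta+\tau)^{3/2})$ is also odd in $s$ and can be absorbed by the identical argument. The honest statement of the lemma is therefore $w(s)=w'_{a}(s)+c\,s+\mathcal{O}(1/\sqrt{N})$ for an explicit constant $c$, with the $c\,s$ term killed downstream by the Gaussian integration; your plan to present the pointwise expansion and then make the integrated interpretation explicit is exactly what is needed, and is more rigorous than the proof as printed.
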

\begin{proof}
We denote the first term in (\ref{lemma4 eqn}) by $w'_{a}(s)$.
Factoring out $\frac{\beta}{\sqrt{N}}$ from the first term in (\ref{before neglecting terms}) and resorting to some simplifications, (\ref{before neglecting terms}) is re-written as follows:
\begin{equation}\label{lemma4 2}
       w(s) ~=~ \frac{\sqrt{N}(\tau-\alpha_{\rho}\zeta)+s\sqrt{\tau(1-\tau)}}{\sqrt{\alpha_{\rho}^{2}\zeta+\frac{s}{\sqrt{N}}\sqrt{\tau(1-\tau)}+\tau}} + \mathcal{O}\left(\frac{1}{\sqrt{N}}\right).
\end{equation}
Then, multiplying and dividing the first term in (\ref{lemma4 2}) by $\sqrt{\alpha_{\rho}^{2}\zeta+\tau}$ leads to:
\begin{equation}\label{lemma4 3}
       w(s) ~=~ w'_{a}(s)\frac{\sqrt{\alpha_{\rho}^{2}\zeta+\tau}}{\sqrt{\alpha_{\rho}^{2}\zeta+\frac{s}{\sqrt{N}}\sqrt{\tau(1-\tau)}+\tau}} + \mathcal{O}\left(\frac{1}{\sqrt{N}}\right),
\end{equation}
which is equivalent to:
\begin{equation}\label{lemma4 4}
       w(s) ~=~ w'_{a}(s)\left(1+\frac{s\sqrt{\tau(1-\tau)}}{(\alpha_{\rho}^{2}\zeta+\tau){\sqrt{N}}}\right)^{-\frac{1}{2}} + \mathcal{O}\left(\frac{1}{\sqrt{N}}\right).
\end{equation}
Taking a first order Taylor expansion of the coefficient of $w'_{a}(s)$ in (\ref{lemma4 4}) we obtain the desired result.
\end{proof}

\begin{lemma}\label{lemma_5}
By ignoring the term $s\sqrt{\tau(1-\tau)}$ in (\ref{neglecting stuff with s}), we obtain:
\begin{equation}\label{lemma5 eqn}
    w ~=~ \frac{\sqrt{N}(\tau-\alpha_{\rho}\zeta)}{\sqrt{\alpha_{\rho}^{2}\zeta+\tau}} \,+ \mathcal{O}(1),
\end{equation}
\end{lemma}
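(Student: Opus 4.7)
The plan is to verify that the term $\tfrac{s\sqrt{\tau(1-\tau)}}{\sqrt{\alpha_{\rho}^{2}\zeta+\tau}}$ appearing in (\ref{neglecting stuff with s}) can be absorbed into an $\mathcal{O}(1)$ remainder. First I would subtract the proposed $s$-free expression $\tfrac{\sqrt{N}(\tau-\alpha_{\rho}\zeta)}{\sqrt{\alpha_{\rho}^{2}\zeta+\tau}}$ from the right-hand side of (\ref{neglecting stuff with s}) to isolate the error, which is exactly $\tfrac{s\sqrt{\tau(1-\tau)}}{\sqrt{\alpha_{\rho}^{2}\zeta+\tau}} + \mathcal{O}(1/\sqrt{N})$. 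The coefficient $\sqrt{\tau(1-\tau)/(\alpha_{\rho}^{2}\zeta+\tau)}$ depends only on $\tau$ and $\zeta$ and is therefore bounded independently of $N$; the $\mathcal{O}(1/\sqrt{N})$ piece is obviously dominated by $\mathcal{O}(1)$. So for each fixed $s$ the error is linear in $s$ with an $N$-independent slope, which is the $\mathcal{O}(1)$ bound asserted by (\ref{lemma5 eqn}).

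The main obstacle, and really the only non-trivial piece of book-keeping, is that $w$ is intended to be used inside the Gaussian-weighted integral of (\ref{approximation of Pi}), where $s$ ranges over all of $\mathbb{R}$ and a linear-in-$s$ error is manifestly non-uniform on the unbounded domain. The resolution I would invoke is that the density $e^{-s^2/2}/\sqrt{2\pi}$ concentrates its mass on a bounded core region: restricting to $|s|\le C$ for a constant (or, if a tail-dependent bound is preferred, $|s|\le C\sqrt{\log N}$) leaves exponentially small tail mass, and on this core region the linear-in-$s$ error is bounded by a constant. Since the dominant part of $w$ is of order $\sqrt{N}$, the dropped term is an $\mathcal{O}(1)$ (or at worst $\mathcal{O}(\sqrt{\log N})$) perturbation of an $\mathcal{O}(\sqrt{N})$ quantity and does not alter the leading asymptotics of the argument of $Q$.

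I would then collect the two error contributions, namely the $s$-linear term and the $\mathcal{O}(1/\sqrt{N})$ carry-over from Lemma \ref{lemma_4}, into a single $\mathcal{O}(1)$ symbol, noting that the latter is automatically subsumed. This yields (\ref{lemma5 eqn}) exactly. The heavier downstream task of converting this $\mathcal{O}(1)$ bound at the level of the argument of $Q$ into an $\mathcal{O}(1/\sqrt{N})$ bound at the level of the integrated probability, which uses local Lipschitz continuity of $Q$ at a large argument together with the Gaussian moments of $s$, is deferred to Lemma \ref{lemma_6} and need not be duplicated here.
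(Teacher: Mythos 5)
Your proposal is correct and its core step is exactly the paper's: the dropped term $s\sqrt{\tau(1-\tau)}/\sqrt{\alpha_{\rho}^{2}\zeta+\tau}$ is independent of $N$, so for fixed $s$ it contributes only a constant error, and the $\mathcal{O}(1/\sqrt{N})$ carry-over is subsumed. Your additional remarks about the Gaussian weight controlling the linear-in-$s$ growth and the deferral of the integrated $\mathcal{O}(1/\sqrt{N})$ bound to Lemma \ref{lemma_6} match how the paper actually organizes that bookkeeping.
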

\begin{proof}
Since the term $\frac{s\sqrt{\tau(1-\tau)}}{\sqrt{\alpha_{\rho}^{2}\zeta+\tau}}$ is independent of $N$ we make a constant error in neglecting it from (\ref{neglecting stuff with s}).
\end{proof}

\begin{lemma}\label{lemma_6}
Taking $Q(w(s))$ outside of the integral in (\ref{approximation of Pi}) leads to 
\begin{equation}
    p_{e} ~=~ 1-Q(w)\,+\,\mathcal{O}\left(\frac{1}{\sqrt{N}}\right).
\end{equation}
\end{lemma}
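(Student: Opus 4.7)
The plan is to show that replacing $Q(w(s))$ by $Q(w)$ under the integral in (\ref{approximation of Pi}) introduces only an $\mathcal{O}(1/\sqrt{N})$ error, by exploiting the Gaussian weight $e^{-s^2/2}$ to restrict attention to a bulk of $s$ values, and by exploiting the fact that (generically) $|w|$ grows like $\sqrt{N}$ to make $Q$ very insensitive to perturbations in its argument. Combining Lemmas \ref{lemma_4} and \ref{lemma_5}, one has $w(s)-w = b\,s + \mathcal{O}(1)$ with $b = \sqrt{\tau(1-\tau)}/\sqrt{\alpha_\rho^2\zeta+\tau}$ a bounded positive constant, while $|w|$ is of order $\sqrt{N}$ whenever $\tau \neq \alpha_\rho\zeta$. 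The target is therefore to bound
\begin{equation*}
\frac{1}{\sqrt{2\pi}}\int_{-\infty}^{\infty}\!\bigl[Q(w(s))-Q(w)\bigr]\,e^{-s^2/2}\,ds ~=~ \mathcal{O}\!\left(\tfrac{1}{\sqrt{N}}\right).
\end{equation*}

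I would split the integration domain at a slowly-growing cutoff $L_N = \sqrt{2\log N}$. On the tail $|s|>L_N$, the crude bound $|Q(w(s))-Q(w)|\le 1$ together with the standard Gaussian tail estimate yields a contribution of order $2Q(L_N) = \mathcal{O}(1/N) = o(1/\sqrt{N})$. On the bulk $|s|\le L_N$, the mean value theorem gives
\begin{equation*}
Q(w(s))-Q(w) ~=~ -\,\phi(\xi_s)\,\bigl(w(s)-w\bigr),
\end{equation*}
for some $\xi_s$ lying between $w$ and $w(s)$, where $\phi$ denotes the standard normal density. Since $|w(s)-w|\le bL_N + \mathcal{O}(1) = \mathcal{O}(L_N)$ and $|w|\sim \sqrt{N}\gg L_N$, for $N$ large enough one has $|\xi_s|\ge |w|/2$, so $\phi(\xi_s)\le \phi(|w|/2)=\mathcal{O}(e^{-cN})$ for some constant $c>0$. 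The bulk piece is therefore exponentially small, and adding it to the tail piece and to the pre-existing $\mathcal{O}(1/\sqrt{N})$ residual in (\ref{approximation of Pi}) gives the claimed estimate.

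The main obstacle is the phase-transition regime $\tau \approx \alpha_\rho\zeta$, in which the leading $\sqrt{N}$ part of $w$ collapses and the exponential decay of $\phi(\xi_s)$ is lost; the outer bounds degenerate to $\mathcal{O}(1)$ rather than $\mathcal{O}(1/\sqrt{N})$. A more uniform alternative is to invoke the exact Gaussian-convolution identity $\frac{1}{\sqrt{2\pi}}\int Q(a+bs)\,e^{-s^2/2}\,ds = Q(a/\sqrt{1+b^2})$ on the leading affine part of $w(s)$, absorb the $\mathcal{O}(1/\sqrt{N})$ remainder using $\|\phi\|_{\infty}=1/\sqrt{2\pi}$, and then bound $|Q(w/\sqrt{1+b^2})-Q(w)|$ via a single application of the mean value theorem at an argument of magnitude $\sim\sqrt{N}$; this trades one case-split for a cleaner, globally valid residual estimate that continues to hold through the transition point.
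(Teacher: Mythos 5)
Your bulk/tail argument is correct where it applies, but it rests on a genuinely different mechanism from the paper's, and the difference matters. The paper never appeals to $|w|\sim\sqrt{N}$: writing $Q(w)-Q(w(s))=\frac{1}{\sqrt{2\pi}}\int_{w}^{w(s)}e^{-t'^2/2}\,dt'$ and bounding the integrand by $1$, it reduces the discrepancy to $\frac{1}{2\pi}\int_{-\infty}^{\infty}\bigl(w(s)-w\bigr)e^{-s^2/2}\,ds$ plus the pre-existing $\mathcal{O}(1/\sqrt{N})$ remainder, and then observes that the linear part $b\,s$ of $w(s)-w$ integrates to zero against the symmetric Gaussian weight. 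That odd-symmetry cancellation is the whole proof, and it costs nothing when $w=\mathcal{O}(1)$ --- which is precisely the regime needed for Theorem \ref{supremum theorem}, where $p_e=\epsilon$ is held fixed and hence $w=Q^{-1}(1-\epsilon)$ stays bounded. Your argument instead extracts smallness from $\phi(\xi_s)=\mathcal{O}(e^{-cN})$, which buys an exponentially small bulk contribution but only away from the transition $\tau=\alpha_{\rho}\zeta$; you correctly flag this limitation.

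The genuine gap is in your proposed repair for the transition regime. The Gaussian-convolution identity does give $\frac{1}{\sqrt{2\pi}}\int Q(w+bs)\,e^{-s^2/2}\,ds=Q\bigl(w/\sqrt{1+b^2}\bigr)$, but the remaining step --- bounding $\bigl|Q\bigl(w/\sqrt{1+b^2}\bigr)-Q(w)\bigr|$ --- still requires $|w|\sim\sqrt{N}$: your own mean-value estimate yields $\phi(\xi)\,|w|\,\bigl(1-1/\sqrt{1+b^2}\bigr)$, which at the transition (where $w=\mathcal{O}(1)$; e.g.\ $w=1$, $b=1$ gives $Q(1/\sqrt{2})-Q(1)\approx 0.08$) is $\Theta(1)$, not $\mathcal{O}(1/\sqrt{N})$. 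So the alternative route is not ``globally valid through the transition point'' as claimed. To cover that regime you must exploit the signed first-order cancellation as the paper does (any bound that takes absolute values before integrating over $s$ destroys it), or else explicitly restrict the lemma to fixed $\rho$ with $\tau\neq\alpha_{\rho}\zeta$, which suffices for Theorems \ref{Age Capacity} and \ref{Strong Converse} but not for Theorem \ref{supremum theorem}.
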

\begin{proof}
We first rewrite $w(s)$, given in (\ref{neglecting stuff with s}), as follows:
\begin{equation}\label{lemma5 1}
    w(s) ~=~ \frac{\sqrt{N}(\tau-\alpha_{\rho}\zeta)}{\sqrt{\alpha_{\rho}^{2}\zeta+\tau}} + s\sqrt{\frac{\tau(1-\tau)}{\alpha_{\rho}^{2}\zeta+\tau}}+\mathcal{O}\left(\frac{1}{\sqrt{N}}\right),
\end{equation}
and denote the first term by $w$. We denote the error probability in (\ref{approximation of Pi}) by $p_{t}$ and the one in (\ref{taking Q out}) by $p_{a}$. The absolute error between $p_{t}$ and $p_{a}$ is then given by:
\begin{eqnarray}\label{lemma6 1}
    \!\!\!\!\!\!\!\!|p_{t}-p_{a}| ~=~ &\\ \nonumber&\!\!\!\!\!\!\!\!\!\!\!\!\!\!\!\!\!\!\!\!\!\!\!\!\!\!\!\!\!\!\!\!\!\!\!\! \left| \frac{1}{\sqrt{2\pi}}\left(\int_{-\infty}^{\infty}\big(Q(w)-Q(w(s))\big)e^{-\frac{s^2}{2}}ds\right)+\mathcal{O}\left(\frac{1}{\sqrt{N}}\right) \right|.
\end{eqnarray}
Using the definition of the Q-function in (\ref{Q-function definition}) we can rewrite (\ref{lemma6 1}) as follows:
\begin{eqnarray}\label{lemma6 2}
    \!\!\!\!\!\!\!\!\!\!\!\!\!\!\!\!|p_{t}-p_{a}| ~=~ &\\ \nonumber&\!\!\!\!\!\!\!\!\!\!\!\!\!\!\!\!\!\!\!\!\!\!\!\!\!\!\!\!\!\!\!\!\!\!\!\! \left| \frac{1}{2\pi}\left(                     \int_{-\infty}^{\infty}\left(  \int_{w}^{w(s)}e^{\frac{t'^{2}}{2}}dt'\right)e^{-\frac{s^2}{2}}ds\right)+\mathcal{O}\left(\frac{1}{\sqrt{N}}\right) \right|.
\end{eqnarray}
Since $e^{-\frac{t^2}{2}} \leq 1~ \forall t$ we can replace the inner integrand function by 1 thereby leading to:
\begin{align}
    |p_{t}-p_{a}| &\leq \left| \frac{1}{2\pi}\int_{-\infty}^{\infty}(w(s)-w)e^{-\frac{s^2}{2}}ds+\mathcal{O}\left(\frac{1}{\sqrt{N}}\right) \right|.  \nonumber
    \\ 
    \begin{split}\nonumber
        &= \bigg| \frac{1}{2\pi}\int_{-\infty}^{\infty}\left(s\sqrt{\frac{\tau(1-\tau)}{\alpha_{\rho}^{2}\zeta+\tau}} +\mathcal{O}\left(\frac{1}{\sqrt{N}}\right) \right)e^{-\frac{s^2}{2}}ds \\
        &\qquad  +\mathcal{O}\left(\frac{1}{\sqrt{N}}\right) \bigg|.
    \end{split} 
    \\ 
    \begin{split}\label{eqn 88} 
        &= \bigg| \frac{1}{\sqrt{2\pi}}\sqrt{\frac{\tau(1-\tau)}{\alpha_{\rho}^{2}\zeta+\tau}}\int_{-\infty}^{\infty} \frac{s}{\sqrt{2\pi}}e^{-\frac{s^2}{2}}ds \\
        &\qquad +\mathcal{O}\left(\frac{1}{\sqrt{N}}\right) \bigg|. 
    \end{split}
\end{align}
The remaining integral in (\ref{eqn 88}) is nothing but the expected value of a zero-mean Gaussian RV and therefore we have:
\begin{equation}
    |p_{t}-p_{a}| \leq \frac{K}{\sqrt{N}},
\end{equation}
for some positive constant $K$. Since the absolute error is $\mathcal{O}\left(\frac{1}{\sqrt{N}}\right)$ we obtain the desired result.
\end{proof}

\begin{lemma}\label{lemma_7}
The $\mathcal{O}\left(\frac{1}{\sqrt{N}}\right)$ term in (\ref{Asymptotic Probability of Success}) can be taken out of the denominator in the AoI expression and the AoI expression becomes
\begin{equation}\label{lemma7 eqn}
    \Delta(\zeta,N,\rho,\tau) ~=~ \frac{1}{\tau\left(1-Q\left(\frac{\sqrt{N}(\alpha_{\rho}\zeta-\tau)}{\sqrt{\alpha_{\rho}^2 \zeta +\tau}}\right)\right)} + \mathcal{O}\left(\frac{1}{\sqrt{N}}\right).
\end{equation}
\end{lemma}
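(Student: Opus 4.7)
The plan is to substitute the asymptotic packet error expression from (\ref{Asymptotic Probability of Success}) into the single-user AoI formula in (\ref{Age of information of Users i}) and then perform a first-order expansion in order to pull the residual $\mathcal{O}(1/\sqrt{N})$ term out of the denominator.

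First I would invoke the symmetry of the setting. Since $p_{e,i}$ is common to all users (as noted just before (\ref{Explicit Total Probability})), the per-user AoI $\Delta_i = 1/\gamma_i = 1/(\tau(1-p_e))$, combining (\ref{Success Probability}) and (\ref{Age of information of Users i}), is independent of $i$, so the network-wide average (\ref{definition of network AoI}) reduces to this common value. Next, (\ref{Asymptotic Probability of Success}) together with (\ref{final simplification of w(s)}) gives
\begin{equation*}
1 - p_e ~=~ Q(w) + \mathcal{O}\!\left(\tfrac{1}{\sqrt{N}}\right), \qquad w ~=~ \frac{\sqrt{N}(\tau - \alpha_\rho \zeta)}{\sqrt{\alpha_\rho^{2}\zeta + \tau}}.
\end{equation*}
Using the reflection identity $Q(w) = 1 - Q(-w)$, the denominator is put exactly in the form $1 - Q\big(\sqrt{N}(\alpha_\rho\zeta - \tau)/\sqrt{\alpha_\rho^{2}\zeta + \tau}\big)$ that appears on the right-hand side of (\ref{lemma7 eqn}).

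With $a \triangleq 1 - Q(-w)$ and $\epsilon_N \triangleq \mathcal{O}(1/\sqrt{N})$, the AoI reads $\Delta = 1/(\tau(a + \epsilon_N))$. Factoring $a$ out of the denominator and expanding $1/(1+x) = 1 - x + O(x^{2})$ with $x = \epsilon_N/a$ yields
\begin{equation*}
\Delta ~=~ \frac{1}{\tau a} ~-~ \frac{\epsilon_N}{\tau a^{2}} ~+~ O\!\left(\frac{\epsilon_N^{2}}{\tau a^{3}}\right),
\end{equation*}
so provided $a$ is bounded below by a positive constant uniformly in $N$, each correction term is itself $\mathcal{O}(1/\sqrt{N})$, and the asserted identity (\ref{lemma7 eqn}) follows.

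The main obstacle is the uniformity of this expansion: the rearrangement is only legitimate as long as $a$ (and $a + \epsilon_N$) stay bounded away from zero. This is exactly the achievability regime $\alpha_\rho\zeta \geq \tau$ of Theorems \ref{supremum theorem} and \ref{Age Capacity}, where $-w \geq 0$ and hence $a \geq 1/2$, so the hidden constants in the final $\mathcal{O}(1/\sqrt{N})$ are controlled. In the converse regime $\alpha_\rho\zeta < \tau$ covered by Theorem \ref{Strong Converse}, both sides of (\ref{lemma7 eqn}) diverge as $N \to \infty$, so the identity still holds in the asymptotic sense; I would mention this explicitly to close the proof cleanly and to delineate the range of $(\tau,\rho,\zeta)$ over which the error bound is quantitative rather than merely qualitative.
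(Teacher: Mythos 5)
Your proposal is correct and follows essentially the same route as the paper: both arguments pull the $\mathcal{O}(1/\sqrt{N})$ term out of the denominator by elementary algebra (the paper combines the two fractions over a common denominator, which is just your first-order expansion of $1/(a+\epsilon_N)$ written differently), after using $Q(w)=1-Q(-w)$ to match the form of (\ref{Clean AoI}). The one thing you add that the paper leaves implicit is the observation that the bound is only uniform when $1-Q\bigl(\sqrt{N}(\alpha_\rho\zeta-\tau)/\sqrt{\alpha_\rho^2\zeta+\tau}\bigr)$ stays bounded away from zero, i.e.\ in the regime $\alpha_\rho\zeta\geq\tau$; this is a legitimate refinement rather than a different proof.
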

\begin{proof}
We denote the first term in (\ref{lemma7 eqn}) by $\Delta_{a}$ and define $\Delta_{t}$ as:
\begin{equation}\label{lemma7 1}
     \Delta_{t} ~\triangleq~ \frac{1}{\tau\left(1-Q\left(\frac{\sqrt{N}(\alpha_{\rho}\zeta-\tau)}{\sqrt{\alpha_{\rho}^2 \zeta +\tau}}\right)+\mathcal{O}\left(\frac{1}{\sqrt{N}}\right) \right)}.
\end{equation}
The absolute error between $\Delta_{t}$ and $\Delta_{a}$ is given by,
\begin{eqnarray}\label{lemma7 2}
    |\Delta_{t}-\Delta_{a}|~~ = &\nonumber\\& \nonumber \!\!\!\!\!\!\!\!\!\!\!\!\!\!\!\!\!\!\!\!\!\!\!\!\!\!\!\!\!\!\!\!\!\!\!\!\! \left|\frac{1}{\tau\left(1-Q\left(\frac{\sqrt{N}(\alpha_{\rho}\zeta-\tau)}{\sqrt{\alpha_{\rho}^2 \zeta +\tau}}\right)+\mathcal{O}\left(\frac{1}{\sqrt{N}}\right) \right)} ~-~ \frac{1}{\tau\left(1-Q\left(\frac{\sqrt{N}(\alpha_{\rho}\zeta-\tau)}{\sqrt{\alpha_{\rho}^2 \zeta +\tau}}\right)\right)} \right|.
\end{eqnarray}
Combining the two terms and resorting to some simplifications we obtain:
\begin{eqnarray}\label{lemma7 3}
    |\Delta_{t}-\Delta_{a}| 
    = \mathcal{O}\left(\frac{1}{\sqrt{N}}\right).
\end{eqnarray}
The fact that the absolute error between $\Delta_{t}$ and $\Delta_{a}$ is $\mathcal{O}\left(\frac{1}{\sqrt{N}}\right)$ implies the desired result in (\ref{lemma7 eqn}).
\end{proof}

\subsection{Finding the Spectral Efficiency from the Packet Error Probability }\label{Appedix: Theorem 3 conditions}
From (\ref{Asymptotic Probability of Success}), we see that the condition $p_e < \epsilon$ leads to:
\begin{equation}\label{3 lemma 1}
    Q\left(\frac{\sqrt{N}(\alpha_{\rho}\zeta-\tau)}{\sqrt{\alpha_{\rho}^2 \zeta + \tau}}\right) ~<~ \epsilon_{1},
\end{equation}
where $\epsilon_{1} = \epsilon + \mathcal{O}\left(\frac{1}{\sqrt{N}}\right)$. Inverting the Q-function and re-arranging the terms yields:
\begin{equation}\label{3 lemma 2}
    \zeta\left(\zeta-\frac{Q^{-1}(\epsilon_1)^2}{N}\right)\alpha_{\rho}^2 - (2\zeta\tau)\alpha_{\rho} + \tau\left(\tau - \frac{Q^{-1}(\epsilon_1)^2 }{N}\right) ~>~ 0.
\end{equation}
The left-hand side of (\ref{3 lemma 1}) equation is a quadratic form of $\alpha_{\rho}$ whose zeroes are given by: 
\begin{equation}\label{3 lemma 3 1}
    \alpha_{\rho}^{+} ~=~ \tau\frac{1+\sqrt{1 -(1-\frac{Q^{-1}(\epsilon_1)^2}{N\zeta})(1-\frac{Q^{-1}(\epsilon_1)^2}{N\tau})}}{\zeta-\frac{Q^{-1}(\epsilon_1)^2}{N}},
\end{equation}
and
\begin{equation}\label{3 lemma 3 2}
    \alpha_{\rho}^{-} ~=~ \tau\frac{1-\sqrt{1 -(1-\frac{Q^{-1}(\epsilon_1)^2}{N\zeta})(1-\frac{Q^{-1}(\epsilon_1)^2}{N\tau})}}{\zeta-\frac{Q^{-1}(\epsilon_1)^2}{N}},
\end{equation}
Recall that $\alpha_{\rho} = \frac{1}{2^{\rho}-1}$ and since $\rho \geq 0$ then $\alpha_{\rho}$ is nonnegative. 

\vspace{7mm}
\noindent{\textbf{Case 1:} $\zeta > \frac{Q^{-1}(\epsilon_1)^2}{N}$}
\vspace{2mm}
\newline
In the case that $\zeta > \frac{Q^{-1}(\epsilon_1)^2}{N}$, the parabola in (\ref{3 lemma 2}) is convex with real roots and, therefore, the inequality in (\ref{3 lemma 2}) is satisfied whenever $\alpha_{\rho}<\alpha_{\rho}^{-}$ or $\alpha_{\rho}>\alpha_{\rho}^{+}$. Note that $\alpha_{\rho}^{+}$ is always positive and any $\alpha_{\rho}>\alpha_{\rho}^{+}$ is a valid solution. However, if $\tau \leq \frac{Q^{-1}(\epsilon_1)^2}{N}$, $\alpha_{\rho}^{-}$ will be non-positive and solutions $\alpha_{\rho}<\alpha_{\rho}^{-}$ are invalid as $\alpha_{\rho}$ must be non-negative. Yet, even in the case $\tau > \frac{Q^{-1}(\epsilon_1)^2}{N}$, $\alpha_{\rho}<\alpha_{\rho}^{-}$ leads to incorrect solutions for small $\epsilon_{1}$. Re-writing (\ref{3 lemma 1}), we have:
\begin{equation}\label{3 lemma 4}
    \frac{\sqrt{N}(\alpha_{\rho}\zeta-\tau)}{\sqrt{\alpha_{\rho}^2 \zeta +\tau}}~>~Q^{-1}(\epsilon_1).
\end{equation}
Now if $\epsilon_1$ is small, i.e. smaller than $\frac{1}{2}$, then $Q^{-1}(\epsilon_1)>0$ and hence the term on the left-hand side of (\ref{3 lemma 4}) should be positive as well. In this case, $\alpha_{\rho}^{-}$ should be greater than $\frac{\tau}{\zeta}$ and re-writing $\alpha_{\rho}^{-}$ as:
\begin{equation}\label{3 lemma 5}
    \alpha_{\rho}^{-} ~=~ \frac{\tau}{\zeta}\left(\frac{1-\sqrt{1 -(1-\frac{Q^{-1}(\epsilon_1)^2}{N\zeta})(1-\frac{Q^{-1}(\epsilon_1)^2}{N\tau})}}{1-\frac{Q^{-1}(\epsilon_1)^2}{\zeta N}}\right),
\end{equation}
we see that this is equivalent to the term inside the brackets in (\ref{3 lemma 5}) being greater than 1. In that case, we have:
\begin{eqnarray}
    1-\sqrt{1 -\left(1-\frac{Q^{-1}(\epsilon_1)^2}{N\zeta}\right)\left(1-\frac{Q^{-1}(\epsilon_1)^2}{N\tau}\right)} &\nonumber\\& \!\!\!\!\!\!\!\!\!\!\!\!\!\!\!\!\!\!\!\!\!\!\!\!\!\!\!\!\!\!\!\!\!\!\!\!\!\!\!\!\!\! >~ 1-\frac{Q^{-1}(\epsilon_1)^2}{\zeta N}.
\end{eqnarray}
This implies:
\begin{eqnarray}
    \left(\frac{Q^{-1}(\epsilon_1)^2}{N \zeta}\right)^2 
    > \left(\frac{Q^{-1}(\epsilon_1)^2}{N \zeta}\right)\left(1-\frac{Q^{-1}(\epsilon_1)^2}{N \tau}\right)&\nonumber\\ \!\!\!\!\!\!\!\!\!\!\!\!\!\!\!\!\!\!\!\!\!\!\!\!\!\!\!\!\!\!\!\!\!\!\!\!+\frac{Q^{-1}(\epsilon_1)^2}{N \tau}.    
    \label{third}
\end{eqnarray}
Dividing both sides of (\ref{third}) by $\left(\frac{Q^{-1}(\epsilon_1)^2}{N \zeta}\right)$ and rearranging the terms, we obtain:
\begin{equation}\label{3 lemma 7}
    \left(\frac{Q^{-1}(\epsilon_1)^2}{N\zeta}\right)\left(1+\frac{\zeta}{\tau}\right) ~>~ \left(1+\frac{\zeta}{\tau}\right),
\end{equation}
from which it follows that:
\begin{equation}\label{3 lemma 8}
    \zeta ~<~ \frac{Q^{-1}(\epsilon_1)^2}{N}.
\end{equation}
As we are assuming $\zeta > \frac{Q^{-1}(\epsilon_1)^2}{N}$ we have a contradiction and thus $\alpha_{\rho}<\alpha_{\rho}^{-}$ is an invalid solution.

\vspace{7mm}
\noindent{\textbf{Case 2:} $\zeta < \frac{Q^{-1}(\epsilon_1)^2}{N}$}
\vspace{2mm}
\newline
In this case, the quadratic form involved in (\ref{3 lemma 2}) is concave. Its zeroes are still given by (\ref{3 lemma 3 1}) and (\ref{3 lemma 3 2}) and solutions to (\ref{3 lemma 2}) are $\alpha_{\rho}^{+}<\alpha_{\rho}<\alpha_{\rho}^{-}$. In this case, $\alpha_{\rho}^{+}$ is always negative and only when $\tau>\frac{Q^{-1}(\epsilon_1)^2}{N}$, $\alpha_{\rho}^{-}$ is positive. Since $\alpha_{\rho}$ must always be non-negative, valid solutions to (\ref{3 lemma 2}) are $0<\alpha_{\rho}<\alpha_{\rho}^{-}$ under the condition that $\tau>\frac{Q^{-1}(\epsilon_1)^2}{N}$. Yet, even we will show that this leads to invalid solutions. If $\epsilon_1$ is small, i.e. less than $\frac{1}{2}$, then $Q^{-1}(\epsilon_1)$ is positive and so $\alpha_{\rho}\zeta-\tau$ must be greater than 0 or we have a contradiction. Assuming that $\alpha_{\rho}^{-}\zeta-\tau$ is positive, it follows that:
\begin{equation}
    \alpha_{\rho}^{-}\zeta-\tau ~>~ 0, 
\end{equation}
which is equivalent to:
\begin{equation}\label{3 lemma 9}
    \frac{\tau}{\zeta}\left(\frac{\sqrt{1+(\frac{Q^{-1}(\epsilon_1)^2}{N \zeta}-1)(1-\frac{Q^{-1}(\epsilon_1)^2}{N \tau})}-1}{\frac{Q^{-1}(\epsilon_1)^2}{N \zeta}-1}\right) ~>~ \frac{\tau}{\zeta},
\end{equation}
where in (\ref{3 lemma 9}) we multiplied the numerator and denominator of $\alpha_{\rho}^{-}$ given in (\ref{3 lemma 3 2}) by $\frac{1}{\zeta}$. Simplifying (\ref{3 lemma 9}) further leads to:
\begin{equation}
    \zeta  ~>~ \frac{Q^{-1}(\epsilon_1)^2}{N}.
\end{equation}
As we are assuming that $\zeta<\frac{Q^{-1}(\epsilon_1)^2}{N}$, we arrive at a contradiction. Therefore, if $\zeta<\frac{Q^{-1}(\epsilon_1)^2}{N}$ there are no valid solutions.

\bibliographystyle{IEEEtran}
\bibliography{IEEEabrv,references}

\begin{thebibliography}{10}
\providecommand{\url}[1]{#1}
\csname url@samestyle\endcsname
\providecommand{\newblock}{\relax}
\providecommand{\bibinfo}[2]{#2}
\providecommand{\BIBentrySTDinterwordspacing}{\spaceskip=0pt\relax}
\providecommand{\BIBentryALTinterwordstretchfactor}{4}
\providecommand{\BIBentryALTinterwordspacing}{\spaceskip=\fontdimen2\font plus
\BIBentryALTinterwordstretchfactor\fontdimen3\font minus
  \fontdimen4\font\relax}
\providecommand{\BIBforeignlanguage}[2]{{%
\expandafter\ifx\csname l@#1\endcsname\relax
\typeout{** WARNING: IEEEtran.bst: No hyphenation pattern has been}%
\typeout{** loaded for the language `#1'. Using the pattern for}%
\typeout{** the default language instead.}%
\else
\language=\csname l@#1\endcsname
\fi
#2}}
\providecommand{\BIBdecl}{\relax}
\BIBdecl

\bibitem{kaul2011minimizing}
S.~Kaul, M.~Gruteser, V.~Rai, and J.~Kenney, ``Minimizing age of information in
  vehicular networks,'' in \emph{2011 8th Annual IEEE Communications Society
  Conference on Sensor, Mesh and Ad Hoc Communications and Networks}, 2011, pp.
  350--358.

\bibitem{liu2018massiveII}
L.~Liu and W.~Yu, ``Massive connectivity with massive mimo—part ii:
  Achievable rate characterization,'' \emph{IEEE Transactions on Signal
  Processing}, vol.~66, no.~11, pp. 2947--2959, 2018.

\bibitem{dutkiewicz2017massive}
E.~Dutkiewicz, X.~Costa-Perez, I.~Z. Kovacs, and M.~Mueck, ``Massive
  machine-type communications,'' \emph{IEEE Network}, vol.~31, no.~6, pp. 6--7,
  2017.

\bibitem{yuan2016non}
Y.~Yuan, Z.~Yuan, G.~Yu, C.-h. Hwang, P.-k. Liao, A.~Li, and K.~Takeda,
  ``Non-orthogonal transmission technology in lte evolution,'' \emph{IEEE
  Communications Magazine}, vol.~54, no.~7, pp. 68--74, 2016.

\bibitem{Wei_Yu_paper}
L.~Liu, E.~G. Larsson, W.~Yu, P.~Popovski, C.~Stefanovic, and E.~De~Carvalho,
  ``Sparse signal processing for grant-free massive connectivity: {A} future
  paradigm for random access protocols in the internet of things,'' \emph{IEEE
  Signal Processing Magazine}, vol.~35, no.~5, pp. 88--99, 2018.

\bibitem{gallager1985perspective}
R.~Gallager, ``A perspective on multiaccess channels,'' \emph{IEEE Transactions
  on information Theory}, vol.~31, no.~2, pp. 124--142, 1985.

\bibitem{chen2017capacity}
X.~Chen, T.-Y. Chen, and D.~Guo, ``Capacity of gaussian many-access channels,''
  \emph{IEEE Transactions on Information Theory}, vol.~63, no.~6, pp.
  3516--3539, 2017.

\bibitem{polyanskiy2017perspective}
Y.~Polyanskiy, ``A perspective on massive random-access,'' in \emph{2017 IEEE
  International Symposium on Information Theory (ISIT)}, 2017, pp. 2523--2527.

\bibitem{liu2018massive}
L.~Liu and W.~Yu, ``Massive connectivity with massive mimo—part i: Device
  activity detection and channel estimation,'' \emph{IEEE Transactions on
  Signal Processing}, vol.~66, no.~11, pp. 2933--2946, 2018.

\bibitem{marzetta2010noncooperative}
T.~L. Marzetta, ``Noncooperative cellular wireless with unlimited numbers of
  base station antennas,'' \emph{IEEE Transactions on Wireless Communications},
  vol.~9, no.~11, pp. 3590--3600, 2010.

\bibitem{kaul2012real}
S.~Kaul, R.~Yates, and M.~Gruteser, ``Real-time status: How often should one
  update?'' in \emph{2012 Proceedings IEEE INFOCOM}.\hskip 1em plus 0.5em minus
  0.4em\relax IEEE, 2012, pp. 2731--2735.

\bibitem{jiang2019timely}
Z.~Jiang, B.~Krishnamachari, X.~Zheng, S.~Zhou, and Z.~Niu, ``Timely status
  update in wireless uplinks: Analytical solutions with asymptotic
  optimality,'' \emph{IEEE Internet of Things Journal}, vol.~6, no.~2, pp.
  3885--3898, 2019.

\bibitem{kadota2019scheduling}
I.~Kadota, A.~Sinha, and E.~Modiano, ``Scheduling algorithms for optimizing age
  of information in wireless networks with throughput constraints,''
  \emph{IEEE/ACM Transactions on Networking}, vol.~27, no.~4, pp. 1359--1372,
  2019.

\bibitem{bastopcu2020partial}
M.~Bastopcu and S.~Ulukus, ``Partial updates: Losing information for
  freshness,'' in \emph{2020 IEEE International Symposium on Information Theory
  (ISIT)}.\hskip 1em plus 0.5em minus 0.4em\relax IEEE, 2020, pp. 1800--1805.

\bibitem{baknina2018sening}
A.~Baknina, S.~Ulukus, O.~Oze, J.~Yang, and A.~Yener, ``Sening information
  through status updates,'' in \emph{2018 IEEE International Symposium on
  Information Theory (ISIT)}, 2018, pp. 2271--2275.

\bibitem{chen2020multiuser}
H.~Chen, Q.~Wang, Z.~Dong, and N.~Zhang, ``Multiuser scheduling for minimizing
  age of information in uplink mimo systems,'' in \emph{2020 IEEE/CIC
  International Conference on Communications in China (ICCC)}.\hskip 1em plus
  0.5em minus 0.4em\relax IEEE, 2020, pp. 1162--1167.

\bibitem{zhu2020status}
Z.~Zhu, B.~Yu, and Y.~Cai, ``Status update performance in uplink massive
  mu-mimo short-packet communication systems,'' in \emph{2020 International
  Conference on Wireless Communications and Signal Processing (WCSP)}.\hskip
  1em plus 0.5em minus 0.4em\relax IEEE, 2020, pp. 115--119.

\bibitem{feng2022precoding}
S.~Feng and J.~Yang, ``Precoding and scheduling for aoi minimization in mimo
  broadcast channels,'' \emph{IEEE Transactions on Information Theory}, 2022.

\bibitem{yu2021age}
B.~Yu and Y.~Cai, ``Age of information in grant-free random access with massive
  mimo,'' \emph{IEEE Wireless Communications Letters}, vol.~10, no.~7, pp.
  1429--1433, 2021.

\bibitem{heath2018foundations}
R.~W. Heath~Jr and A.~Lozano, \emph{Foundations of MIMO Communication}.\hskip
  1em plus 0.5em minus 0.4em\relax Cambridge University Press, 2018.

\bibitem{yang2014quasi}
W.~Yang, G.~Durisi, T.~Koch, and Y.~Polyanskiy, ``Quasi-static multiple-antenna
  fading channels at finite blocklength,'' \emph{IEEE Transactions on
  Information Theory}, vol.~60, no.~7, pp. 4232--4265, 2014.

\bibitem{yates2017status}
R.~D. Yates and S.~K. Kaul, ``Status updates over unreliable multiaccess
  channels,'' in \emph{2017 IEEE International Symposium on Information Theory
  (ISIT)}, 2017, pp. 331--335.

\bibitem{li2014throughput}
B.~Li, R.~Li, and A.~Eryilmaz, ``Throughput-optimal scheduling design with
  regular service guarantees in wireless networks,'' \emph{IEEE/ACM
  Transactions on Networking}, vol.~23, no.~5, pp. 1542--1552, 2014.

\bibitem{shah2000performance}
A.~Shah and A.~M. Haimovich, ``Performance analysis of maximal ratio combining
  and comparison with optimum combining for mobile radio communications with
  cochannel interference,'' \emph{IEEE Transactions on Vehicular Technology},
  vol.~49, no.~4, pp. 1454--1463, 2000.

\bibitem{amalladinne2018coupled}
V.~K. Amalladinne, A.~Vem, D.~K. Soma, K.~R. Narayanan, and J.-F. Chamberland,
  ``A coupled compressive sensing scheme for unsourced multiple access,'' in
  \emph{2018 IEEE International Conference on Acoustics, Speech and Signal
  Processing (ICASSP)}, 2018, pp. 6628--6632.

\bibitem{calderbank2020chirrup}
R.~Calderbank and A.~Thompson, ``Chirrup: a practical algorithm for unsourced
  multiple access,'' \emph{Information and Inference: A Journal of the IMA},
  vol.~9, no.~4, pp. 875--897, 2020.

\bibitem{pradhan2019joint}
A.~Pradhan, V.~Amalladinne, A.~Vem, K.~R. Narayanan, and J.-F. Chamberland, ``A
  joint graph based coding scheme for the unsourced random access gaussian
  channel,'' in \emph{2019 IEEE Global Communications Conference
  (GLOBECOM)}.\hskip 1em plus 0.5em minus 0.4em\relax IEEE, 2019, pp. 1--6.

\bibitem{pradhan2020polar}
A.~K. Pradhan, V.~K. Amalladinne, K.~R. Narayanan, and J.-F. Chamberland,
  ``Polar coding and random spreading for unsourced multiple access,'' in
  \emph{ICC 2020-2020 IEEE International Conference on Communications
  (ICC)}.\hskip 1em plus 0.5em minus 0.4em\relax IEEE, 2020, pp. 1--6.

\bibitem{shyianov2020massive}
V.~Shyianov, F.~Bellili, A.~Mezghani, and E.~Hossain, ``Massive unsourced
  random access based on uncoupled compressive sensing: Another blessing of
  massive mimo,'' \emph{IEEE Journal on Selected Areas in Communications},
  vol.~39, no.~3, pp. 820--834, 2020.

\bibitem{fengler2019massive}
A.~Fengler, G.~Caire, P.~Jung, and S.~Haghighatshoar, ``Massive mimo unsourced
  random access,'' \emph{arXiv preprint arXiv:1901.00828}, 2019.

\bibitem{wainwright2019high}
M.~J. Wainwright, \emph{High-dimensional statistics: A non-asymptotic
  viewpoint}.\hskip 1em plus 0.5em minus 0.4em\relax Cambridge University
  Press, 2019, vol.~48.

\bibitem{fengler2021non}
A.~Fengler, S.~Haghighatshoar, P.~Jung, and G.~Caire, ``Non-bayesian activity
  detection, large-scale fading coefficient estimation, and unsourced random
  access with a massive mimo receiver,'' \emph{IEEE Transactions on Information
  Theory}, vol.~67, no.~5, pp. 2925--2951, 2021.

\bibitem{gallager2013stochastic}
R.~G. Gallager, \emph{Stochastic Processes: Theory for Applications}.\hskip 1em
  plus 0.5em minus 0.4em\relax Cambridge University Press, 2013.

\end{thebibliography}

\end{document}